\newlength{\pu}
\spnewtheorem{notation}{Notation}{\bfseries}{\itshape}
\newcommand{\midoc}{dynamic I/O automaton }  
\newcommand{\debut}{\mbox{\textrm{begin}}}
\newcommand{\fin}{\mbox{\textrm{end}}}
\newcommand{\cC}{\gamma}
\newcommand{\remove}[1]{}
\begin{document}

\mainmatter \title{A Framework for Proving the Self-organization of Dynamic Systems}

\author{Emmanuelle Anceaume\inst{1} \and Xavier D\'efago\inst{2} \and Maria Potop-Butucaru\inst{3} \and Matthieu Roy\inst{4} }
\authorrunning{E. Anceaume \and X. D\'efago\inst{2} \and M. Potop-Butucaru\inst{3} \and M. Roy\inst{4} }

\institute{CNRS UMR 6074, IRISA, Rennes, France\\
  \email{Emmanuelle.Anceaume@irisa.fr}
  \and School of Information Science, JAIST, Nomi, Ishikawa, Japan\\
  \email{defago@jaist.ac.jp}
  \and LIP6, INRIA-Universit\'e Paris 6, France\\
  \email{maria.gradinariu@lip6.fr}
  \and LAAS-CNRS, Toulouse, France\\
  \email{matthieu.roy@laas.fr} }

\maketitle

\begin{abstract}
  This paper aims at providing a rigorous definition of 
  \emph{self-organization}, one of the most desired properties for
  dynamic systems (e.g.,  peer-to-peer systems, sensor networks,
  cooperative robotics, or ad-hoc networks).  We characterize different classes of self-organization through  liveness and safety properties that both  capture information regarding the system entropy. We illustrate these classes through study cases. The first ones are  two
  representative P2P overlays (CAN and Pastry) and the others are specific
  implementations of $\Omega$ (the leader oracle)
  and one-shot query abstractions for dynamic settings.  Our study aims at understanding
  the limits and respective power of existing self-organized protocols
  and lays the basis of designing robust algorithm for dynamic
  systems.
\end{abstract}

\section{Introduction}
\label{sec:intro}

Self-organization is an evolutionary process that appears in many
disciplines. Physics, biology, chemistry, mathematics, economics, just
to cite a few of them, show many examples of self-organizing
systems. Crystallization, percolation, chemical reactions, proteins
folding, flocking, cellular automata, market economy are among the
well-admitted self-organizing systems. In all these disciplines,
self-organization is described as a process from which properties
emerge at a global level of the system. These properties are solely
due to local interactions among components of the system, that is with
no explicit control from outside the system. Influence of the
environment is present but not intrusive, in the sense that it does
not disturb the internal organization process.

In the newly emerging fields of distributed systems (peer-to-peer systems, ad-hoc
networks, sensors networks, cooperative robotics), self-organization
becomes one of the most desired properties.  The major feature of all
recent scalable distributed systems is their extreme dynamism in terms
of structure, content, and load.  In peer-to-peer systems (often referred as to P2P systems), nodes continuously
join and leave the system. In large scale sensor, ad-hoc or robot
networks, the energy fluctuation of batteries and the inherent
mobility of nodes induce a dynamic aspect of the
system. 
In all these systems there is no central entity in charge of their
organization and control, and there is an equal capability, and
responsibility entrusted to each of them to own
data~\cite{harnessing}. To cope with such characteristics, these
systems must be able to spontaneously organize toward desirable global
properties.  In peer-to-peer systems, self-organization is handled
through protocols for node arrival and departure, as provided by
distributed hash tables based-overlay (e.g.,~CAN~\cite{CAN-thesis}, Chord~\cite{chord},
Pastry~\cite{DR01,RD01}), or random graph-based ones
(e.g.,~Gnutella~\cite{gnutella}, GIA~\cite{gia}). 
Secure operations in ad-hoc networks rely on a self-organizing
public-key management system that allows users to create, store,
distribute, and revoke their public keys without the help of any
trusted authority or fixed server~\cite{CBH03}. Recent large scale
applications (e.g.,~persistent data store, monitoring) exploit the
natural self-organization of peers in semantic communities to provide
efficient search~\cite{GBF04,MJB04,SMZ03}.  Self-organizing heuristics
and algorithms are implemented in cooperating  robots networks  so that
they can configure themselves into predefined geometric patterns
(e.g.,~\cite{FMSY02,SY99,ZA02}). For instance, crystal growth
phenomenon inspired Fujibayashi et~al.~\cite{FMSY02} to make robots
organize themselves into established shapes by mimicking spring
properties.

Informal definitions of self-organization, and of related self$^\ast$
properties (e.g., self-configuration, self-healing or
self-reconfiguration) have been proposed previously~\cite{BMM02,WWA00,ZA02}.  Specifically, Babao\u{g}lu et~al.~\cite{BMM02} propose a
platform, called Anthill, whose aim is the design of  peer-to-peer
applications based on self-organized colonies and swarms of agents.
Anthill offers a bottom-up opportunity to understand the emergent
behavior of complex adaptive systems.  Walter et~al.~\cite{WWA00}
focus on the concept of reconfiguration of a metamorphic robotic
system with respect to a target configuration.
Zhang and Arora~\cite{ZA02} propose the concepts of self-healing and
self-configuration in wireless ad-hoc networks, and propose
self-stabilizing solutions for self$^\ast$-clustering in
ad-hoc networks~\cite{Dol00}. Note that a
comprehensive survey of self-organizing systems that emanate from
different disciplines is proposed by Serugundo et al.~\cite{self-org-survey}.


The main focus of this paper is to propose a formal
specification of the self-organization notion which, for the best of
our knowledge, has never been formalized in the area of scalable and
dynamic systems, in spite of an overwhelming use of the term.
Reducing the specification of self-organization to the behavior of the
system during its non-dynamic periods is clearly not satisfying
essentially because these periods may be very short, and rare. On the
other hand, defining self-organization as a simple convergence process
towards a stable predefined set of admissible configurations is
inadequate for the following two reasons. First, it may not be possible
to clearly characterize the set of admissible configurations since, in
dynamic systems, a configuration may include the state of some key
parameters that may not be quantified \emph{a priori} but have a strong influence on the dynamicity of the
system. For instance, the
status of batteries in sensor networks, or the data stored at peers in P2P
systems. Second, due to the dynamic behavior of nodes, it may happen
that no execution of the system converges to one of the predefined
admissible configurations.

Hence our attempt to specify self-organization according to the very
principles that govern dynamic systems, namely high interaction,
dynamics, and heterogeneity.  High interaction relates to the
propensity of nodes to continuously exchange information or
resources with other nodes around them.  For instance, in mobile
robots systems, infinitely often robots query their neighborhood to
arrange themselves within a predefined shape.  The dynamics of these
systems refer to the capability of nodes to continuously move around,
to join or to leave the system as often as they wish based on their
local knowledge. Finally heterogeneity refers to the specificity of
each entity of the system: some have huge computation resources, some
have large memory space, some are highly dynamic, some have broad
centers of interest. In contrast, seeing such systems as a simple mass
of entities completely obviates the differences that may exist between
individual entities; those very differences make the richness of these
systems.

The tenets mentioned above share as a common seed the locality
principle, i.e.,~the fact that both interactions and knowledge are
limited in range.  The first contribution of this paper is a
formalization of this idea, leading first to the notion of \emph{local
  self-organization}.  Intuitively, a locally self-organizing system
should reduce locally the entropy of the system.  For example, a
locally self-organized P2P system forces components to be adjacent to
components that improve, or at least maintain, some property or
evaluation criterion.  
The second contribution of the paper is the proposition of different classes of self-organization through safety and liveness properties that both capture information regarding the entropy of the system.  Basically, we propose three classes of self-organization. The first one characterizes  dynamic systems that converge toward sought global properties only during stable periods of time (these properties can be lost due to  instability). The second  one  depicts dynamic systems capable of  infinitely often increasing the convergence  towards global properties (despite some form of instability). Finally, the last one describes dynamic systems capable of continuously increasing that convergence. We show that complex emergent properties can be described as a combination of local and independent properties. 


The remaining of this paper is organized as follows: Section~\ref{sec:Model} proposes a model for dynamic and scalable systems, and in particular enriches the family of demons to capture the churn of the system.
Section~\ref{sec:discussions} proposes an overview of 
self-stabilization and superstabilization models.  
Section~\ref{sec:fromlocaltoself} formalizes the notion of self-organization
by introducing local and global evaluation criteria. Section~\ref{sec:different-level} characterizes different classes of self-organization through liveness and safety properties. Each class is illustrated with different study cases. Section~\ref{sec:composition} extends the concept of self-organization for one criteria to the one of self-organization for a finite set of independent criteria. 
Section \ref{sec:conclusion} concludes and discusses open
issues.

\section{Model}
\label{sec:Model}

\subsection{Dynamic System Model}
\subsubsection{Communication Graph.}
\label{sec:dynamic-system-model}
The physical network is described by a weakly connected graph. Its
nodes represent processes of the system and its edges represent
communication links between processes. In the following, this graph is referred  as to the physical communication graph.  We assume that the
physical communication graph is subject to frequent and unpredictable
changes. Causes of these changes are either due to volunteer actions
(such as joins and leaves of nodes in P2P systems, moving
of a robot in cooperating robots networks, or power supply limitation in
sensors networks) or accidental events (such as nodes failures in a
network of sensors, the sudden apparition of an obstacle that may temporarily
hide some region of a network of robots, or messages losses in a P2P
system). Therefore, even if the number of nodes in the system remains
constant, the physical graph communication may change due to
modifications in the local connectivity of nodes.

\subsubsection{Data Model.}  Nearly all modern applications in the
dynamic distributed systems are based on the principle of data
  independence---the separation of data from the programs that use
the data.  This concept was first developed in the context of database
management systems.
In dynamic systems, in particular in P2P systems, the presence of data stored locally
at each node 
 plays a crucial role in creating semantic based
communities. 
As a consequence, data are subject to frequent and unpredictable
changes to adjust to the dynamics of the nodes. In addition, for efficiency reasons it may be necessary to aggregate data, or even to erase part of them. Consequently data can be
subject to permanent or transient failures.

\subsubsection{Logical communication graph.}
\label{par:overlays}
The physical communication graph combined with  the data stored in the network
represent the (multi-layer) logical communication graph of the system,
each logical layer~$l$ being weakly connected.  In order to connect to
a particular layer~$l$, a process executes an underlying connection
protocol.  A node~$n$ is called \emph{active} at a layer~$l$ if there
exists at least one node~$r$ which is connected at $l$ and aware of
$n$.  The set of logical \emph{neighbors} of a node~$n$ at a layer~$l$
is the set of nodes~$r$ such that the logical link~$(n,r)$ is up ($n$
and $r$ are aware of each other) and is denoted ${\mathcal N}^l(n)$.
Notice that node~$n$ may belong to several layers simultaneously.
Thus, node $n$ may have different sets of neighbors at different logical
layers.  In peer-to-peer systems, the (multi-layer) communication
graph is often referred to as the logical structured overlay when the communication graph is structured (as it it the case with CAN~\cite{CAN-thesis}, Pastry~\cite{pastry}, and 
Chord~\cite{chord}), or unstructured logical overlay when the graph is random (as for example Gnutella, and Skype).  In sensors or ad-hoc networks,
connected coverings (such as trees, weakly connected maximal
independent sets or connected dominating sets) represent the logical
communication graph.

Note that in sensors networks, cooperating robots networks, and ad-hoc networks  the logical communication graph may coincide with the
physical communication one. On the other hand, P2P logical overlays do not usually share relationships with the physical location of peers.

\subsection{State Machine-Based Framework}
\label{sec:io}

To rigorously analyze the execution of the dynamic systems, we use the
dynamic I/O automata introduced by Attie and Lynch~\cite{AL01}. This
model allows the modeling of individual components, their interactions
and their changes.  The external actions of a dynamic I/O automata are
classified into three categories, namely the actions that modify data
(by replication, aggregation, removal, or writing), the input-output
actions (I/O actions), and the actions reflecting the system dynamics.
Regarding this last category, we identify connection (C) and
disconnection (D) actions. These actions model the arrival and departure of a node
 in a P2P system, or the  physical moving of a robot by decomposing its movement as 
disconnection followed by connection actions, or the removal 
of a too far away sensor from the view of a sensor (because of power supply limitation).   A {\it
  configuration} is the state of the system at time $t$ together with the
communication graph and data stored in the system.  An {\it execution}
is a maximal sequence of totally ordered configurations. Let $c_{t}$ be the current configuration of the system. It moves  to  configuration $c_{t+1}$ after the
execution of either an internal, an input/output action,  or a dynamic action.  A {\it fragment of execution} is a
finite sub-sequence of an execution.  Its size is the subsequence
length.  A \emph{static fragment} is a maximal-sized fragment that
does not contain any C/D actions. Let $f=(c_i, \ldots, c_j)$ be a fragment
in a \midoc execution.  We denote as $\mbox{\textrm{begin}}(f)$ and
$\mbox{\textrm{end}}(f)$ the configurations $c_i$ and $c_j$
respectively.  In the sequel, all the referred fragments are static
and are denoted by $f$ or $f_i$. Thus, an execution of a \midoc is a
infinite sequence of fragments $e=(f_0, \ldots,f_i, \ldots f_j,
\ldots)$ interleaved with dynamic actions.

\subsection{Churn Model}
\label{sec:dynamicschedulers}
As previously described, both the logical communication graph and the
system data are subject to 
changes. Whatever their causes, we refer to these changes as
to the churn of the system. Note that this definition is compliant with the one proposed by Brighten Godfrey et al~\cite{minimizingchurn} which defines churn as the sum, over each configuration, of the fraction of the system that has changed during that configuration, normalised by the duration of the execution.


Interestingly, the effect of the churn on the system computations is
similar with the effect of  synchrony in classical
distributed systems and in particular in the theory of
self-stabilization where the demon/scheduler
abstraction captures the system synchrony~\cite{Dol00}. According to that
synchrony, a hierarchy of schedulers that ranges from centralized to
arbitrary ones exists.  In short a demon is a predicate defined
over the forest of executions (in the sense defined above) of a
system. A system under a scheduler is the restriction of the
executions of the system to those verifying the scheduler predicate.
 
To capture the churn of the system we enrich the demons family with a
new class of demons --- the dynamic demons or dynamic schedulers. The
predicate of a dynamic demon characterises the dynamic actions (C/D) 
of the system. 

  


In the following we propose four classes  of dynamic demons. Note that
for the best of our knowledge, all churn models discussed so far in
distributed systems area are covered by these four classes~\cite{opodis05,baldoni07,minimizingchurn,hugues&carole,MRTPAA05}.

\begin{description}
\item[Bounded dynamic demon] Under this demon, the number of
C/D actions during any execution of  the system 
is finite and a bound on this number is known. In \cite{MRTPAA05},
the authors consider an $\alpha$-parameterized system where the
$\alpha$ parameter is a known  lower bound on the number of nodes that 
eventually remain connected within the system. In this model augmented with some 
communication synchrony the authors propose  the implementation of the $\Omega$ oracle discussed 
later in this paper. 
\item[Finite dynamic demon] Under this demon, the number of
C/D actions 
during any execution of the system is finite but the bound is unknown. That is, 
protocols designed under this demon cannot use the bound on the number of nodes active in the system. 
Different forms of the finite dynamic scheduler have been proposed in \cite{aguilera04}. 
Delporte et al \cite{hugues&carole} 
propose agreement algorithms compliant with this demon. 
\item[Kernel-based dynamic demon] Under this demon, between any two successive static fragments $f_{i}$ and $f_{i+1}$ of any execution of the system, there exists a non-empty group of nodes $\mathcal G_{i}$ for which the local knowledge is not impacted by C/D actions. 
Baldoni et al~\cite{baldoni07} extend this characterization to specific topological 
properties of the communication graph. 
Specifically, the authors define the churn impact  with respect to  the diameter of the graph,
and distinguish three classes of churn. In the first one,  the diameter of the graph is constant 
and every active node in the system can use this knowledge. In the second class, the diameter 
is upper bounded but active nodes do not have access to that upper bound. 
Finally, in the third one the diameter of the 
system can grow infinitely.   
\item[Arbitrary dynamic demon] Under this demon, there is no restriction on the number of C/D actions that can occur during any execution of the system. That is, at any time, any subset of nodes
can join or leave the system and  there is no stabilization
constraint on the communication graph.
\end{description}

In the following we discuss different candidate models for self-organization and further propose our model. In this framework we further revisit  the lookup primitives of CAN~\cite{CAN-thesis} and Pastry~\cite{pastry}, 
the leader election problem~\cite{MRTPAA05}, and the one-shot query problem~\cite{baldoni07}. We  show the level of self-organization solutions to these problems ensure according to the strength of the dynamic demon.

\section{Candidate Models for Self-organization}
\label{sec:discussions}

The ultimate goal of our work is to formally define the notion of self-organization
for dynamic systems and to capture the conditions under which these systems are able to 
self-organize.  In the following we discuss two candidate models  to
capture the self-organization of dynamic systems: the
self-stabilization and the super-stabilization models~\cite{Dol00}.

In static systems, self-stabilization~\cite{Dol00} is an admirable
framework offering models and proof tools for systems that have the
ability to autonomously reach legitimate (legal) behavior despite
their initial configuration. 

\begin{definition}[Self-stabilization]
  \label{def:ss}
  Let ${\mathcal S}$ be a system and ${\mathcal L}$ be a predicate over
  its configurations defining the set of legitimate
  configurations. ${\mathcal S}$ is
  self-stabilizing for the specification $\mathcal{SP}$ if and only if  the following three properties hold:\\
  \hspace*{0.5cm}--- {\sf (convergence)} Any execution of ${\mathcal
    S}$ reaches
  a legitimate configuration that satisfies ${\mathcal L}$.\\
  \hspace*{0.5cm}--- {\sf (closure)} The set of legitimate configurations is closed.\\
  \hspace*{0.5cm}--- {\sf (correctness)} Any execution starting in a
  legitimate configuration satisfies the specification
  ${\mathcal{SP}}$.
\end{definition}

The following theorem enlightens
the limits of the self-stabilization definition in dynamic
systems. Specifically, Theorem~\ref{theo:imposibility}
shows that for each static demon ${\mathcal D}$
and for each self-stabilizing system ${\mathcal S}$ under ${\mathcal
  D}$ there exists a dynamic demon $\mathcal D^\prime$
such that ${\mathcal S}$ does not self-stabilize under $\mathcal D
\wedge \mathcal D^\prime$. In other words, the self-stabilization of a
system in dynamic settings is strongly related to the demon
definition. Note that similar arguments informally motivated the study of
super-stabilizing systems~\cite{DH95}.

\begin{theorem}
  \label{theo:imposibility}
  Let ${\mathcal S}$ be a system self-stabilizing under  demon
  ${\mathcal D}$. Let ${\mathcal L}$ be the set of the legitimate
  configurations of ${\mathcal S}$. There exists a dynamic demon
  ${\mathcal D}^\prime$
  such that $S$ does not self-stabilize under ${\mathcal D}^\prime$.
\end{theorem}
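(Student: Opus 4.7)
The plan is to exhibit an adversarial dynamic demon $\mathcal{D}'$ that prevents the convergence property of Definition~\ref{def:ss} by interleaving a C/D action between every two computation steps. This ensures that no static fragment of the resulting execution has length greater than one, destroying the stability of the communication graph on which static self-stabilization implicitly relies.

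First I would argue that $\mathcal{L}$ must be a strict subset of the configuration space, since otherwise $\mathcal{S}$ trivially self-stabilizes under any demon and the theorem is vacuous. Pick any legitimate configuration $\ell \in \mathcal{L}$. Since configurations encode the state of the communication graph (Section~\ref{sec:io}) and legitimacy depends on relationships between node states and their neighborhoods, there exists a C/D action $\alpha$---for instance, the disconnection of a node $n$ that is active in $\ell$---such that applying $\alpha$ yields a configuration in which some local legitimacy constraint (e.g., a node's view of a neighbor that no longer exists, or a topological invariant of the overlay) is violated. I would then define $\mathcal{D}'$ as the arbitrary dynamic demon of Section~\ref{sec:dynamicschedulers} restricted to executions that fire $\alpha$ (or a symmetric disconnection of whatever node is currently active) between every pair of consecutive computation steps. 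The resulting executions consist of length-one static fragments interleaved with C/D actions, so no sequence of computation steps of the length needed for convergence occurs over a stable graph.

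The main obstacle is making precise the step that every static self-stabilizing system admits such a disruptive C/D action and, more importantly, that a sequence of such actions can be scheduled so as to keep the system outside $\mathcal{L}$ forever, rather than merely bouncing in and out. I would address this by constructing the adversarial execution inductively: at each step, after the system performs a computation step, the demon selects a C/D action (e.g., disconnection of a sufficient number of nodes to reach a degenerate, nearly-empty topology) for which the resulting configuration is guaranteed to lie outside $\mathcal{L}$, since $\mathcal{L}$ on any realistic system requires a minimum amount of connected state. The resulting execution never visits $\mathcal{L}$, violating convergence and witnessing that $\mathcal{S}$ does not self-stabilize under $\mathcal{D}'$. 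This formal argument makes precise the informal intuition that motivated super-stabilizing systems~\cite{DH95}, and justifies the weaker notion of self-organization developed in the remainder of the paper.
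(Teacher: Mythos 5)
Your overall strategy---an adversarial dynamic demon that perpetually injects C/D actions so that convergence never completes---is the same as the paper's, but the way you schedule the disruptions leaves a real gap. Your demon fires a C/D action \emph{after every computation step}, and you only guarantee that the configuration reached \emph{after each C/D action} lies outside $\mathcal{L}$. Nothing prevents a single computation step (your length-one static fragments still permit one step) from landing the system \emph{inside} $\mathcal{L}$ before the next C/D action fires; in that case the execution does visit $\mathcal{L}$, and your stated conclusion ``the resulting execution never visits $\mathcal{L}$, violating convergence'' does not follow. You flag exactly this danger (``rather than merely bouncing in and out'') but your fix only controls the post-C/D configurations, not the post-computation ones. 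The paper closes this hole differently: its demon does not act uniformly between steps but instead \emph{intercepts precisely those transitions that would lead into a legitimate configuration}, replacing each such transition with a dynamic action whose target is illegitimate; applied recurrently, this yields an execution that provably never enters $\mathcal{L}$. You could rescue your version either by adopting that interception scheme, or by arguing through the closure property instead of convergence (a momentary visit to $\mathcal{L}$ followed by a C/D action that exits $\mathcal{L}$ already violates closure), but as written the convergence-based claim is unsupported.

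A secondary weakness is your justification for why a disruptive C/D action always exists: you appeal to ``any realistic system'' requiring ``a minimum amount of connected state,'' which imports an assumption about $\mathcal{L}$ that is not in the theorem statement (e.g.\ a legitimacy predicate purely on local variables need not be falsified by a disconnection). The paper uses a more robust device---the arbitrary dynamic demon replaces the processes with \emph{fresh, erroneously initialized} ones---so that illegitimacy is supplied directly by the erroneous initialization rather than by a topological side effect. Both arguments still implicitly assume $\mathcal{L}$ is a proper, dynamically escapable subset of the configuration space; to your credit, you make that non-triviality assumption explicit where the paper does not.
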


\begin{proof}
  The proof is based on the following intuitive idea which is  depicted in Figure \ref{fig:imposibility}.  For each
  execution of ${\mathcal S}$ an isomorphic execution
  that does not converge under $\mathcal D \wedge \mathcal D^\prime$ is constructed.
  Let $e$ be the execution of ${\mathcal S}$ under ${\mathcal D}$ and
  let ${\mathcal D}^\prime$ be defined as follows: each transition in
  ${\mathcal S}$ under ${\mathcal D}$ that leads to a legitimate
  configuration is replaced by a dynamic action 
  such that the new
  configuration is illegitimate.  Since ${\mathcal S}$ is
  self-stabilizing under ${\mathcal D}$ then any execution $e$ of
  ${\mathcal S}$ under ${\mathcal D}$ reaches a legitimate
  configuration. Consequently, in each execution $e$ there exists a
  configuration $c_1$ that precedes the legitimate configuration,
  $cl$. An arbitrary dynamic demon can always obtain from $c_1$ a new
  illegitimate configuration by replacing for example all the
  processes in $c_1$ with fresh processes erroneously initialized.

  \remove{ More formally, each $choice$, $ch$, in ${\mathcal D}$ such
    that the configuration $c_{l_1}$ is in ${\mathcal L}$, with $(c,
    ch, c_{l_1})$ transition in $e$ is replaced by a dynamic choice,
    $ch^d$ such that $c_{d_1}$ is illegitimate with $(c, ch^d,
    c_{d_1})$ transition in $e^d$, where $e^d$ is the execution of
    ${\mathcal S}$ under ${\mathcal D}^d$.  } Applying recurrently the
  replacement technique we can construct an infinite execution of
  ${\mathcal S}$ under ${\mathcal D} \wedge {\mathcal D}^\prime$ that
  never converges to a legitimate configuration. Hence, the system is
  not self-stabilizing under an arbitrary dynamic demon.
\end{proof}

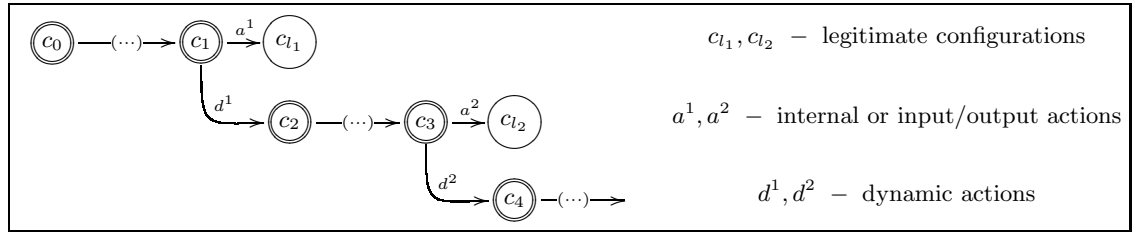
\begin{figure}\small
  \centering{ \fbox{ \xymatrix@-1pc{
        *++[o][F=]{c_0}\ar[rr]|{(\cdots)}&~~ & *++[o][F=]{c_1}
        \ar[r]^{a^1} \ar@(d,l)[dr]^{d^{1}} & *++[o][F]{c_{l_1}}
        &&&&&& *!{ c_{l_1} ,c_{l_2} ~-~ \textrm{legitimate configurations}} \\
        &&& *++[o][F=]{c_2}\ar[rr]|{(\cdots)} && *++[o][F=]{c_3}
        \ar[r]^{a^2} \ar@(d,l)[dr]^{d^{2}} & *++[o][F]{c_{l_2}}
        &&&*!{a^1,a^2
          ~-~\textrm{internal or input/output actions}}\\
        &&&&& &*++[o][F=]{c_{4}}\ar[rr]|{(\cdots)} & *{~~} &&
        *!{d^{1},d^{2}~-~\textrm{dynamic actions}} } }}
  \caption{Construction of a divergent execution}
  \label{fig:imposibility}
\end{figure}


Note that self-stabilization captures the self$^\ast$ aspects
of dynamic systems, in particular its self-organization, as long as \emph{i)}
the churn period is finite or \emph{ii)} static fragments are long enough so that the system stabilizes before a new period of churn occurs. 

Because of  the complexity of the new emergent systems (i.e.,  P2P,
sensors networks, robot networks) in terms
of both topology and data dynamicity,  coverage of these assumptions is low.  To go further in that direction, 
SuperStabilization has been proposed in the pioneering work of Dolev
and Herman~\cite{DH95}.  Superstabilization proposes a proof framework that
does not restrict the characterization of the system behavior to
static fragments only but rather extends it to dynamic periods.
%

Specifically, superstabilization proposes techniques for repairing a
self-stabi-lizing algorithm after C/D actions, such as the joining of fresh nodes  or after
topological changes due to the creation/removal of communication links.
communication links.
%

\begin{definition}[Superstabilization]
  A protocol or algorithm is (continuously) superstabilizing with respect
  to the class of topological changes $\Gamma$ if and only if  it is
  self-stabilizing and for every execution $e$ beginning in a
  legitimate state and containing only topology changes of type
  $\Gamma$ the passage predicate is verified for every configuration
  in $e$.
\end{definition}

Note that $\Gamma$, the class of  topological
changes  addressed by superstabilization,  are either single topology changes or multiple topology
changes provided that either each change is completely accommodated
before the next change occurs or changes are concurrent and each
set of concurrent changes is totally absorbed before the next
concurrent set occurs. This hypothesis is crucial for the safety  of
the system.  Otherwise the impossibility result shown  in Theorem~\ref{theo:imposibility} applies.


In~\cite{DH97,Her00} the authors  propose automatic mechanisms that transform
self-stabilizing algorithms into  superstabilizing ones. Colouring,
 spanning tree construction, depth first search, and mutual exclusion are among the studied algorithms. The basic idea of the transformation is the following one: when a topology change is detected the
system is frozen until the topology is repaired. During this repairing period, the
system verifies a transition predicate that captures the system
state just after the topology change occurred.

As will be clarified in the following sections, contrary to superstabilization we propose to characterize the notion of convergence for a dynamic system and propose sufficient conditions for
a system to self-organize according to the churn model.

It should be noted that any self-stabilizing or superstabilizing
system is also self-organizing under a finite dynamic scheduler or in
systems where each dynamic period is followed by long enough stability
periods.  However, a self-organizing system is not necessarily
self-stabilizing or superstabilizing.  We argue that our study is complementary
to the studies related to both self-stabilization and superstabilization, and opens an
alternative way for designing superstabilizing systems. That is,
superstabilization offers techniques for enforcing
self-stabilizing systems to keep their properties despite topology changes. However, most of
the systems are not self-stabilizing which restricts applicability of the superstabilization approach.  We
advocate that a superstabilizing system can be obtained from a
self-organizing system enriched with modules in charge of the system
stabilization.


\section{From Local Self-organization to Self-organization}
\label{sec:fromlocaltoself}

As discussed in the Introduction, self-organization is a global property that emerges from local interactions among nodes. To formally define what is a self-organized system, we first look at the system at a microscopic level, and then extend this study  to the macroscopic level, i.e., at the system as a whole. 

\subsection{Local Evaluation Criterion}
\label{sec:local}

Self-organization refers to the fact that the structure, the organization, or global properties of a system reveal themselves without any control from outside. This  emergence results only from internal constraints due to local interactions between its components or nodes. We model these local constraints as an objective function or evaluation criterion that nodes try to continuously maximize.

An objective function or evaluation criterion of a node  $p$ is an abstract
object defined on $p$'s local state 
and the local states of all its one/multi-hop neighbors.  Note that typically, the knowledge of a node is restricted to its one-hop
neighbors (as it is the case in sensors networks, cooperative robots networks, or P2P systems). However, it may happens that in some fine grained
atomicity models (as the read/write atomicity of Dolev~\cite{Dol00}) each
node maintains copies of both its local state and the ones of its
neighbors at more than one hop from itself. Therefore in these models it makes sense for a node  to
compute an evaluation criterion  based on both its local state and the
ones of its multi-hop neighbors. 
%


 In the following, $\gamma_{p,c}$  will denote the evaluation criterion  at node 
 $p$ in configuration $c$. $\gamma_{p,c}$  is  a $[0,1]$ function defined on 
 a co-domain CD equipped with a partial order relation ${\mathcal R}$. 
 The relation $\mathcal R$ is the $\leq$ relation on real
numbers. Function  $\gamma_{p,c}$ represents the aggregate of $\gamma_{p,c}(q)$ 
for all neighbors $q$ of $p$ in configuration $c$. 

In order to define the local self-organization, we introduce the
notion of \emph{stable configurations}. Informally, a
configuration~$c$ is $p$-\emph{stable} for a given evaluation
criterion in the neighborhood of node~$p$ if the local criterion
reached a local maximum in $c$.

\begin{definition}[$p$-stable configuration]
  Let $c$ be a configuration of a system~$\mathcal{S}$,  $p$ be a
node, $\gamma_{p,c}$ be the local criterion of $p$ in configuration 
  $c$ and $\preceq$ be a partially order relation on the codomain of
  $\gamma_{p,c}$.  Configuration~$c$ is $p$-\emph{stable} for $\gamma_{p,c}$
  if, for any configuration~$c^\prime$ reached from $c$ after one
  action executed by $p$, $\gamma_{p,c^\prime} \preceq \gamma_{p,c}$.
\end{definition}

\begin{definition}[Local self-organization]
  Let $\mathcal{S}$ be a system, $p$ be a process, and $\gamma_{p}$ be the local
  criterion of $p$.  $\mathcal{S}$ is locally self-organizing for
  $\gamma_p$ if $\mathcal{S}$ eventually reaches a $p$-stable
  configuration.  
\end{definition}

In P2P systems local self-organization should force nodes to have as 
logical  neighbors nodes that  improve a sought  evaluation
criterion.  Module~\ref{alg:LSA} executed by node~$p$, referred
to as LSA in the sequel, proposes a local self-organizing generic
algorithm for an arbitrary  criterion~$\cC_p$.  Note that
existing DHT-based peer-to-peer systems  execute similar algorithms to ensure
self-organization with respect to specific criteria
(e.g.,~geographical proximity) as shown in the sequel. The nice property of our generic
algorithm is its adaptability to unstructured networks.

LSA is based on a greedy technique, which reveals to be a well adapted
technique for function optimization.  Its principle follows the here
above intuition: Let $q$ such that $q \in\mathcal{N}^{\cC_p}(p)$, and
$r$ such that $r \in\mathcal{N}^{\cC_p}(q)$ but $r
\not\in\mathcal{N}^{\cC_p}(p)$, where $\mathcal{N}^{\cC_p}(p)$ and
$\mathcal{N}^{\cC_p}(q)$ are the logical neighborhoods of $p$ and $q$
respectively with respect to criterion~$\cC_p$.  If $p$ notices in configuration $c$ that
$r$ improves the evaluation criterion previously computed for $q$,
then $p$ replaces $q$ by $r$ in $\mathcal{N}^{\cC_p}(p)$. Inputs of this
algorithm are the evaluation criterion $\cC_p$ and the set of $p$'s
neighbors for $\cC_p$, that is $\mathcal{N}^{\cC_p}(p)$.  The output is
the updated view of $\mathcal{N}^{\cC_p}(p)$. Given a criterion~$\cC_p$, a
$p$-stable configuration $c$, in this context, is a configuration where for
any neighbor~$q$ of $p$, there is no neighbor~$r$ of $q$ ($r \neq p$)
that improves $\cC_p$, formally $\forall q \in \mathcal{N}^{\cC_p}(p),
\forall r \in \mathcal{N}^{\cC_p}(q) \setminus \mathcal{N}^{\cC_p}(p),~
{\cC_{p,c}}(r) \leq {\cC_{p,c}}(q)$.

Note that, because of the partial view that a node  has on the
global state of the system (due to the scalability and dynamism of the
system), only a heuristic algorithm can be found under these
assumptions.

\begin{Module}
  \textbf{Inputs:}\\
  \hspace*{0.5cm}
  $\cC_p$: the evaluation criterion used by $p$;\\

  \hspace*{0.5cm} ${\mathcal N}^{\cC_p}(p)$: $p$ neighbors for the
  evaluation criterion $\cC$;
  \begin{tabbing}
    \textbf{Actions:}\\
    \hspace*{.5cm} \=
    $\mathcal{R}~:~$\=\textsf{if} $\exists q \in {\mathcal N}^{\cC_p}(p), \exists r \in \mathcal{N}^{\cC_{p}}(q)\setminus \mathcal{N}^{\cC_{p}}(p), {\cC_{p,c}}(q) < {\cC_{p,c}}(r) $\\
    \> \> \textsf{then}~~$\mathcal{N}^{\cC_p}(p)=\mathcal{N}^{\cC_p}(p) \bigcup \{r_\mathit{max}\} \setminus {q}$;\\
    \>\>\hspace*{1cm} 
    where $r_\mathit{max} \in {\mathcal N}^{\cC_p}(q)$,
    ${\cC_{p,c}}(r_\mathit{max})= \max_{r^\prime\in {\mathcal N}^{\cC_p}(q),
      {\cC_{p,c}}(q) \leq {\cC_{p,c}}(r^\prime)}({\cC_{p,c}}(r^\prime))$
  \end{tabbing}
  \caption{Local Self-Organization Algorithm for Criteria $\cC$
    Executed by $p$ in configuration $c$ (LSA)}
  \label{alg:LSA}
\end{Module}

\begin{theorem}[Local Self-organization of LSA]
  \label{lsa-theorem}
  Let $\mathcal{S}$ be a system and $\gamma_p$ be an objective function of node $p$. If $p$ executes the LSA algorithm with $\gamma_p$,
  then $\mathcal{S}$ is a locally self-organizing system for $\gamma_p$.
\end{theorem}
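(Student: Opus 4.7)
The plan is to show termination of the only action that LSA offers at $p$, by exhibiting a strict monovariant, and then to match the disabledness of that action with the definition of a $p$-stable configuration.

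First, I would observe that the definition of $p$-stability, $\forall q \in \mathcal{N}^{\cC_p}(p),\ \forall r \in \mathcal{N}^{\cC_p}(q)\setminus \mathcal{N}^{\cC_p}(p),\ \cC_{p,c}(r) \leq \cC_{p,c}(q)$, is precisely the negation of the guard of rule $\mathcal{R}$ in Module~\ref{alg:LSA}. Hence a configuration is $p$-stable for $\cC_p$ if and only if $\mathcal{R}$ is disabled at $p$. So it suffices to prove that, after finitely many executions of $\mathcal{R}$ by $p$, the guard of $\mathcal{R}$ becomes permanently false.

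Second, I would introduce the multiset potential $\Phi(c) = \{\!\!\{ \cC_{p,c}(q) : q \in \mathcal{N}^{\cC_p}(p) \}\!\!\}$, ordered by the multiset extension of $\leq$ on $[0,1]$. Each firing of $\mathcal{R}$ removes one element $\cC_{p,c}(q)$ from $\Phi(c)$ and inserts $\cC_{p,c}(r_{\mathit{max}})$ with $\cC_{p,c}(r_{\mathit{max}}) > \cC_{p,c}(q)$, while all other entries are unchanged (since $|\mathcal{N}^{\cC_p}(p)|$ stays constant). Therefore $\Phi$ is strictly increased by every execution of $\mathcal{R}$. The same conclusion follows for any strictly monotone aggregate (sum, maximum) one prefers to use in place of the multiset order.

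Third, I would argue well-foundedness on the reachable portion of the state space. Although the co-domain $[0,1]$ is dense in general, the values actually attained by $\cC_{p,c}(\cdot)$ across reachable configurations form a finite set: a neighbor is always drawn from the finite active population, and $\cC_{p,c}(x)$ depends only on local states in the neighborhood of $p$, which range over a finite set of discrete local states. Consequently the reachable values of $\Phi$ form a finite set totally pre-ordered by the multiset extension, so no infinite strictly-increasing chain exists. Combined with step two, this shows that $\mathcal{R}$ can fire only finitely many times, after which the system sits in a $p$-stable configuration, which by step one establishes local self-organization for $\cC_p$.

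The main obstacle is the last step: the theorem is stated with $\cC_{p,c}$ valued in the continuous set $[0,1]$, so one cannot blindly appeal to well-foundedness of $\leq$. The justification really has to go through the finiteness of reachable neighbor configurations (equivalently, the discreteness of the underlying I/O automaton state), which is implicit in the model of Section~\ref{sec:io} but must be made explicit here; otherwise a pathologically chosen $\cC$ could yield an infinite ascending chain of potential values and spoil termination.
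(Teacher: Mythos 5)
Your proof takes essentially the same route as the paper's: each firing of rule $\mathcal{R}$ strictly increases the criterion value of one neighbor while leaving the others unchanged, so the rule can fire only finitely often, and once its guard is permanently false the configuration is $p$-stable by the paper's own specialization of $p$-stability to LSA (the paper additionally notes, in one line each, that the same argument restarts after a join or a neighborhood change in the dynamic case). Your only substantive addition is the explicit well-foundedness step via finiteness of the reachable criterion values; the paper instead invokes boundedness of $\cC_{p,c}$ in $[0,1]$ together with an implicitly finite node population to make each replacement chain $\gamma_{p,c_0}(q_0) < \cdots < \gamma_{p,\mathit{end}}(q_m)$ finite, and your observation that boundedness alone would not exclude an infinite ascending chain is a fair tightening of that step rather than a different argument.
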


\begin{proof}
  Let $p$ be a node  in the system executing the LSA algorithm.
  Assume that $\mathcal{S}$ does not locally self-organize in the
  neighborhood of $p$. That is, there is an execution of
  $\mathcal{S}$, say $e$, that does not include a $p$-stable
  configuration.
  
  Assume first that $e$ is a static execution (i.e.,~no
  C/D action is executed during $e$). Let $c$ be
  the first configuration in $e$. By assumption of the proof, $c$ is
  not $p$-stable. Thus there is a neighbor of $p$, say $q$, that has
  itself a neighbor improving the evaluation criterion. Hence, rule
  ${\mathcal R}$ (Module \ref{alg:LSA}) can be applied which makes $r$
  replacing $q$ in the neighbors table of $p$. By applying the
  assumption of the proof again, the obtained configuration is not
  stable, hence there is at least one neighbor of $p$ which has a
  neighbor which improves the evaluation criteria. Since the
  evaluation criteria is bounded and since the replacement of a
  neighbor is done only if there is a neighbor at distance $2$ which
  strictly improves the evaluation criterion, then either the system
  converges to a configuration $c_\mathit{end}$ where the evaluation
  criterion reaches its maximum for some neighbors of $p$, or the
  evaluation criterion cannot be improved.

  In other words, for each node $q$ neighbor of $p$ we can exhibit a
  finite maximal string:
  $$\gamma_{p,c_0}(q_0) < \gamma_{p,c_1}(q_1) < \ldots < \gamma_{p,\mathit{end}}(q_m)$$ 
    where $q_0$ is the node $q$ and $q_i$,
  $i=\overline{1,m}$ are the nodes which will successively replace the
  initial node~$q$.  Let $c_\mathit{end}$ be the configuration where
  the node~$q_m$ is added to the neighbors table of $p$.  In
  $c_\mathit{end}$ the value of $\gamma_{p,c_\mathit{end}}(q_m)$ is maximal hence, either
  $c_\mathit{end}$ is stable, or no neighbor of $q_m$ improves the
  evaluation criterion.  Thus $c_\mathit{end}$ is stable. Consequently,
  there exists a configuration in $e$, namely $c_\mathit{end}$, which
  is $p$-stable.
 
  Assume now that the execution $e$ is dynamic, hence the system size
  and topology may be modified by nodes connection and disconnection.
  Assume that node $p$ joins the system. This case is similar to the
  previous one, where $p$ executes rule ${\mathcal R}$ of Module
  \ref{alg:LSA} a finite number of times until it reaches a $p$-stable
  configuration.
  
  Now, let us study the case where the system is in a $p$-stable
  configuration and, due to the connection of some node $r$,  $p$'s 
  neighborhood changes.  That is $r$
  appears in $p$'s 
  neighborhood. Suppose that  $r$ improves $\gamma_p$.  Then $p$ applies  rule ${\mathcal R}$.  As previously shown, the system
  reaches in a finite number of steps a $p$-stable configuration.
  \qed
\end{proof}

\subsection{Global Evaluation Criterion}
\label{sec:global_evaluation_criterion}

We now introduce the notion of \emph{global evaluation criterion}, denoted
in the following $\gamma$.  The global evaluation criterion evaluates
the global organization of the system at a given configuration. More
precisely, the global evaluation criterion is the aggregate of all
local criteria.  For instance, if the evaluation criterion is the
logical proximity (i.e., the closer a process, the higher the
evaluation criterion), then optimizing the global evaluation criterion
$\gamma$ will result in all processes being connected to nearby
processes.

Let $\gamma_p$ be the local criterion of  process $p$, and $\preceq$ be a partially order relation on the codomain of
  $\gamma_{p}$.  In the
sequel we focus only on global evaluation criteria $\gamma$ that exhibit the following property:
\begin{description}
\item[Global evaluation criterion property]
\begin{eqnarray*}
  \forall f, \forall c_1, c_2 
  \in f, ~\gamma(c_1) \prec \gamma(c_2) \mbox{~if~} 
  \exists p, ~\gamma_p(c_1,\cC_p) \prec \gamma_p(c_2,\cC_p)
  \mbox{~and~} \\
  \forall t \neq p,~ \gamma_t(c_1, \cC_t) \preceq \gamma_t(c_2, \cC_t)
\end{eqnarray*}
\end{description}

Intuitively, the increase of the value of a local criterion will
determine the increase of the global criterion if the other local
criteria increase their values or remain constant.  An example of
criterion that meets such requirements is the union/intersection of
$[0,1]$-valued local criteria. Namely, $\gamma$ is the sum of a local
aggregation criterion $\gamma$: $\displaystyle \gamma(c)=\sum_{p\in
  \mathcal{S}}~\gamma_p(c)$.
  
  We now define the notion of self-organization in a dynamic system: 
  
  \begin{definition}[Self-organization] 
 Let $\mathcal{S}$ be a system, $p$ be a process, and $\gamma_{p}$ be the local
  criterion of $p$.   System $\mathcal{S}$ is self-organizing if $\forall
  p \in \mathcal{S}$, $\mathcal{S}$ is locally self-organizing for
  $\gamma_p$.
  \end{definition}

\section{From Weak Self-organization to Strong Self-organization}
\label{sec:different-level}

 
The next three sections present the different forms of self-organization a dynamic system should exhibit. These different forms of self-organization are strongly related to the churn model the system  is subject to. In its weakest form, self-organization is only required during static fragments, i.e., in absence of any dynamic actions. Clearly, this imposes no restriction on the churn model, and thus weakly self-organized systems should be able to tolerate an arbitrary dynamic demon. In its strongest form, self-organization should guarantee that the system as a whole is  able to maintain or even progressively increase its global knowledge at any time during its execution (that is in both static fragments and instability periods). Contrary to its weakest form, this limits the strength of the demon. In particular this requires that in any configuration there exists some part of the system (not necessarily the same one) for which the knowledge  remain constant or even increases. Therefore,  this imposes that during instability periods, some part of the system does not get  affected by churn. This behavior is encapsulated in the kernel-based dynamic demon. 
Finally, in between these two extreme forms, self-organization should guarantee that infinitely often the entropy of the system should reduce, that is period of times during which the objective function at each node should increase or at least remain constant. This requires to limit the number of C/D dynamic actions that may occur during any system execution. 
All these different forms of self-organization are described according to a safety and liveness property. Safety  must be preserved at all times, while liveness can be temporarily hindered by instability, but eventually when the system enters static fragments some progress must be accomplish.

\subsection{Weak Self-Organization}
\label{sec:global}

The weak self-organization is defined in terms of two properties.  The
\emph{weak liveness} property says that for each static
fragment~$f_i$, either (1)~$f_i$ is stable, or (2)~there exists some
fragment~$f_j$, in the future of $f_i$, during which the global
evaluation criteria strictly improves (see
Fig.~\ref{fig:weak-liveness}).  The \emph{safety} property requires
that the global evaluation criteria never decreases during a static
fragment.  Formally, we have:

\begin{definition}[Weak Self-Organization]
\label{def:weak-self-org}
  Let $\mathcal{S}$ be a system and $\gamma$ be a global evaluation
  criterion defined on the configurations of $\mathcal{S}$.  A system
  is weakly self-organizing for $\gamma$ if the following two
  properties hold (recall that $(f_0, \ldots, f_i, \ldots)$ stand for
  static fragments):
      
    \begin{property}[Safety Property]\label{prop:sp}
    \begin{eqnarray*}
     \forall e=(f_0, \ldots, f, \ldots), \forall f \in e: \gamma(\debut(f)) \preceq \gamma(\fin(f)).
  \end{eqnarray*}
  \end{property}
  
  \begin{property}[Weak Liveness Property]\label{prop:wlp}
    \begin{eqnarray*}
      \forall e\!=\!(f_0, \ldots, f_i, \ldots, f_j, \ldots), \forall f_i\!\in\!e,
      \exists f_j\!\in\!e, && j\!\geq\!i:~  \gamma(\debut(f_j))  \prec \gamma(\fin(f_j))\\
      && \textsf{or} ~ \forall p \in \mathcal{S}, \debut(f_j)
      ~\mbox{is \ensuremath{p}-stable}.
    \end{eqnarray*}
    \end{property}

\end{definition}
\noindent

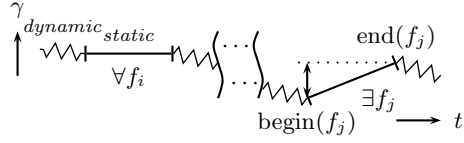
\begin{figure}[t]
  \centering
  \begin{pspicture}(0,0)(\pu,0.25\pu)
    \psline{->}(0,.1\pu)(0,.2\pu) \uput[u](0,.2\pu){$\gamma$}
    \psline{->}(.85\pu,0)(.95\pu,0) \uput[r](.95\pu,0){$t$}

    \pszigzag[coilarm=0mm,coilwidth=2mm,linewidth=.5pt]{-}(.05\pu,.16\pu)(.15\pu,.15\pu)
    \psline{|-|}(.15\pu,.15\pu)(.35\pu,.15\pu)
    \uput[d](.25\pu,.15\pu){$\forall f_i$}
    \pszigzag[coilarm=0mm,coilwidth=2mm,linewidth=.5pt]{-}(.35\pu,.15\pu)(.45\pu,.13\pu)
    
    \pscurve{-}(.45\pu,.2\pu)(.44\pu,.17\pu)(.45\pu,.13\pu)(.46\pu,.09\pu)(.45\pu,.05\pu)
    \uput[u](.49\pu,.125\pu){$\cdots$}
    \uput[d](.51\pu,.125\pu){$\cdots$}
    \pscurve{-}(.54\pu,.2\pu)(.53\pu,.17\pu)(.54\pu,.13\pu)(.55\pu,.09\pu)(.54\pu,.05\pu)

    \pszigzag[coilarm=0mm,coilwidth=2mm,linewidth=.5pt]{-}(.55\pu,.07\pu)(.65\pu,.05\pu)
    \psline{|-|}(.65\pu,.05\pu)(.85\pu,.13\pu)
    \uput[dr](.75\pu,.09\pu){$\exists f_j$}
    \pszigzag[coilarm=0mm,coilwidth=2mm,linewidth=.5pt]{-}(.85\pu,.13\pu)(.95\pu,.1\pu)

    \psline[linestyle=dotted]{-}(.85\pu,.13\pu)(.63\pu,.13\pu)
    \psline{<->}(.65\pu,.05\pu)(.65\pu,.13\pu)

    \uput[u](.1\pu,.16\pu){\scriptsize \emph{dynamic}}
    \uput[u](.25\pu,.15\pu){\scriptsize \emph{static}}
    \uput[d](.65\pu,.05\pu){\footnotesize $\debut(f_j)$}
    \uput[u](.85\pu,.13\pu){\footnotesize $\fin(f_j)$}
  \end{pspicture}
  \caption{Illustration of the Weak liveness property.}
  \label{fig:weak-liveness}
\end{figure}
The following theorem gives a sufficient  condition to build a
weakly self-organizing system:

\begin{theorem}[Weak Self-organization]
  \label{theo:weak}
  Let $\mathcal{S}$ be a system and $\gamma$ be a  global
  objective function. System $\mathcal{S}$ is weakly self-organizing for
  $\gamma$ if for any node  $p$ system $\mathcal{S}$ locally
  self-organizes in $p$'s neighborhood.
\end{theorem}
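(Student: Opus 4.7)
The plan is to establish the two properties of weak self-organization—safety and weak liveness—separately, leveraging the hypothesis that the system locally self-organizes in every node's neighborhood together with the global evaluation criterion property.

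For safety, I would first observe that inside a static fragment no C/D action occurs, so the only actions a node can execute are internal or I/O actions performed in service of its local self-organization routine. The key step is to argue that such actions never decrease the local criterion $\gamma_p$ of the executing node: this is precisely the monotone behavior embodied by LSA (the replacement rule fires only when a strictly better neighbor is found). Hence each transition in the fragment either leaves all local criteria unchanged or strictly improves some $\gamma_p$ while leaving the others unchanged. Applying the global evaluation criterion property transition by transition then yields $\gamma(\debut(f)) \preceq \gamma(\fin(f))$ by induction on the length of $f$.

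For weak liveness, I would fix an arbitrary static fragment $f_i$ and analyze the future of the execution. Two mutually exclusive cases arise. In the first, some subsequent static fragment $f_j$ has $\debut(f_j)$ that is $p$-stable for every $p$, and the second disjunct of Property~\ref{prop:wlp} holds outright. In the second, no future fragment starts in a fully stable configuration, so for every $j \geq i$ there is a node $p$ whose state at $\debut(f_j)$ is not $p$-stable. Local self-organization for $p$ guarantees that some future configuration of the execution is $p$-stable, and the only way to pass from a non-stable to a stable configuration is through an action of $p$ that strictly increases $\gamma_p$. Since dynamic actions occur only between static fragments, such an improving action must lie inside some static fragment $f_j$. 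By the safety argument above no other local criterion decreases during $f_j$, so the global evaluation criterion property yields $\gamma(\debut(f_j)) \prec \gamma(\fin(f_j))$, satisfying the first disjunct.

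The delicate point, and the step I expect to be the main obstacle, is bridging the purely existential statement ``$\mathcal{S}$ eventually reaches a $p$-stable configuration'' with the per-step monotonicity needed for safety. One must either strengthen the hypothesis (as LSA implicitly does, by using a greedy rule that only moves in the direction of increasing $\gamma_p$) or reason carefully that any local self-organizing algorithm must, during a static fragment, refrain from actions that strictly decrease a local criterion—otherwise the eventual reachability of a $p$-stable configuration could be defeated by adversarial scheduling. Making this bridge rigorous, and then combining it cleanly with the global evaluation criterion property so as to lift ``some $\gamma_p$ strictly increases while the others are non-decreasing'' to ``$\gamma$ strictly increases'', is the crux of the argument.
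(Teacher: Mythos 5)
Your decomposition into safety and weak liveness, and the use of the global evaluation criterion property to lift ``one local criterion strictly increases while the others are non-decreasing'' to a strict increase of $\gamma$, is exactly the route the paper takes. In fact your treatment is more careful than the paper's on both counts. For safety, the paper simply asserts that local self-organization implies each node either executes actions making $\gamma_p(\debut(f)) \prec \gamma_p(\fin(f))$ or executes no action at all; it never bridges the gap you correctly flag, namely that the definition of local self-organization is a purely eventual reachability statement (``$\mathcal{S}$ eventually reaches a $p$-stable configuration'') and by itself does not forbid a node from transiently decreasing $\gamma_p$ inside a static fragment. The paper implicitly assumes the greedy, monotone behavior of LSA, just as you suspect one must; so the ``crux'' you identify is a genuine weakness of the theorem as stated, not something you failed to see. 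For weak liveness, your explicit case split on whether some future fragment begins in a configuration that is $p$-stable for all $p$ is also an improvement: the paper's proof claims unconditionally that some future fragment $f_j$ contains a strictly improving action of $p$, which is false when the system is already stable at $\debut(f_i)$ and which is precisely why Property~\ref{prop:wlp} carries its second disjunct. So your proposal is faithful to the paper's argument where that argument is sound, and correctly isolates the one step (per-fragment monotonicity of each $\gamma_p$) that neither you nor the paper derives from the stated hypotheses alone.
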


\begin{proof}
  Let $e$ be an execution of system $\mathcal{S}$.
  
  \begin{description}
  \item[Safety] Let $f$ be a static fragment in $e$.  By assumption 
    $\mathcal{S}$ is locally self-organizing for $\gamma_p$ for any node $p\in S$. Thus we  have two situations:
     either  $p$ executes some particular 
    actions in $f$ that makes $\gamma_p(\debut(f)) \prec \gamma_p(\fin(f))$ true
    or  $p$ does not execute any action and in that case,
    $\gamma_p(\debut(f)) \preceq \gamma_p(\fin(f))$. Overall,
   by definition of a global objective function this leads to  $\gamma(\debut(f)) \preceq \gamma(\fin(f))$. This proves Property~\ref{prop:sp}.

  \item[Weak liveness] Let $p$ be some node in $S$. Let $f_i$ be an
    arbitrary static fragment in $e$.  By assumption $\mathcal{S}$ is
    locally self-organizing. Thus  there is a static fragment $f_j,~ i\leq j$
    in $e$ such that $p$ executes a particular action in $f_j$
    that makes $\gamma_p(\debut(f_j)) \prec \gamma_p(\fin(f_j))$.  Overall, for any
    $f_i$ there is a fragment $f_j$ such that $ \gamma(\debut(f_j))
    \prec \gamma(\fin(f_j)) $. This completes the proof of Property~\ref{prop:wlp}, and ends the proof of the theorem.
  \end{description}
  \qed
\end{proof}

\subsection{Case Study of Weak Self-Organization: CAN}
\label{sec:can_case_study}

In this section, we prove the self-organization of CAN.
\label{sec:can}
CAN~\cite{CAN-sigcomm} is a scalable content-addressable network, the
principle of which is to use a single namespace---the $d$-dimensional
torus $[0,1]^d$---for both data and nodes.  Data and CAN nodes are
assigned unique names within this namespace, and each node is
responsible for a volume surrounding its identifier in the torus. The
distributed algorithm executed on a node arrival or departure ensures
that the complete torus volume is partitioned between all
participating CAN nodes.

These algorithms are crucial for the self-organization of the system,
since the topology of CAN changes only when nodes enter or leave the
system. In the following, we show how these protocols fit into our
self-organization framework.  
\noindent
Consider the following local criterion at node $p$:
\begin{eqnarray}
\gamma_p^\mathit{CAN}(q)\stackrel{\rm def}{=}\frac{1}{1+\mathit{dist}(p,q)},
\end{eqnarray}
where $\mathit{dist}$ is the Cartesian distance in the torus.

\begin{lemma}[CAN]
  CAN is weakly self-organized with respect to 
  $\gamma^{CAN}_p$ objective function.
\end{lemma}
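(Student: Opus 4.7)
The plan is to apply Theorem~\ref{theo:weak}, which reduces the goal to showing that CAN locally self-organizes for $\gamma_p^{CAN}$ in the neighborhood of every node $p$. The central structural fact I will use is that CAN modifies the neighbor set $\mathcal{N}^{\gamma_p^{CAN}}(p)$ of a node $p$ only inside its join/leave (C/D) protocols; routing, lookup, and every other internal action leave $\mathcal{N}(p)$ unchanged.

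First I would recall CAN's partition invariant: in every reachable configuration, the zones of the active nodes form a partition of the torus $[0,1]^d$, and each node's neighbor set consists exactly of the owners of the zones sharing a $(d-1)$-face with its own. Joins split an existing zone and departures merge or reassign one, performing the corresponding local updates to the neighbor lists so as to preserve this invariant.

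Next I would derive local self-organization. Fix a node $p$ and an arbitrary static fragment $f$. Since $f$ contains no dynamic action, no zone boundary, and hence no neighbor set, is modified within $f$; every internal action executed by $p$ in $f$ leaves $\mathcal{N}(p)$, and therefore $\gamma_p$, unchanged. Consequently, for every configuration $c \in f$ and every $c'$ reached from $c$ by a single action of $p$, $\gamma_{p,c'} = \gamma_{p,c}$, so $c$ is vacuously $p$-stable. Local self-organization of CAN for $\gamma_p^{CAN}$ therefore holds at every node~$p$.

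Applying Theorem~\ref{theo:weak} then delivers weak self-organization for the aggregated global criterion $\gamma^{CAN}$: safety (Property~\ref{prop:sp}) holds because $\gamma^{CAN}$ is constant on any static fragment, so $\gamma^{CAN}(\debut(f)) \preceq \gamma^{CAN}(\fin(f))$; weak liveness (Property~\ref{prop:wlp}) is discharged through its second disjunct, since in every static fragment $f_j$ the configuration $\debut(f_j)$ is $p$-stable for all $p$. The main subtlety is conceptual rather than computational: CAN's organizing work is embedded inside its C/D protocols, which is why local self-organization is vacuous on static fragments and the genuine content of the lemma is that the zone-partition/adjacency invariant is preserved by joins and leaves, from which weak self-organization follows as a formal consequence of Theorem~\ref{theo:weak}.
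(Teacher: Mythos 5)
Your overall strategy---reduce to local self-organization and invoke Theorem~\ref{theo:weak}---matches the paper's, but your central structural claim is false in the paper's model, and the proof collapses with it. You assert that CAN modifies neighbor sets only ``inside its join/leave (C/D) protocols,'' so that within a static fragment $\gamma_p$ is constant and every configuration is vacuously $p$-stable. In the model of Section~\ref{sec:io}, however, a C/D action is a single dynamic transition, and the \emph{repair} work of CAN's takeover/split protocols runs as ordinary internal and I/O actions \emph{inside the static fragment that follows} the dynamic action. This is exactly where the paper locates the content of the lemma: after a departure, each former neighbor of the departed node sees its criterion drop (the distance to the departed node becomes infinite), so the configuration at the beginning of the next static fragment is \emph{not} $p$-stable; the node then searches for a replacement neighbor, and because CAN chooses the replacement optimally with respect to Cartesian distance, $\gamma_p$ strictly increases during that fragment until a $p$-stable configuration is reached. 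Your claim that no neighbor set changes within a static fragment is therefore contradicted by the departure case, and your conclusion that $\gamma^{CAN}$ is constant on static fragments does not hold.

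There is also a conceptual warning sign you should have noticed: if your vacuous-stability argument were sound, it would prove the lemma for \emph{any} criterion whatsoever, since it never uses the definition of $\gamma_p^{CAN}$. The lemma is not trivial in that way. What actually has to be shown---and what the paper shows---is that CAN's repair actions are greedy improvements with respect to the specific distance-based criterion: on removal, the replacement neighbor is distance-optimal, so each affected node's criterion increases back to a local maximum; on insertion, a new node becomes a neighbor of $r$ only if it is at least as close as one of $r$'s previous neighbors, so no node's criterion decreases. Those two facts give safety and local self-organization non-vacuously, and only then does Theorem~\ref{theo:weak} apply. Your appeal to the zone-partition invariant is fine as background, but it does not substitute for this argument.
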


\begin{proof} 
  We show that protocols for node insertion and node
  removal in CAN perform actions that keep  the system in a $p$-stable
  configuration. 

  \begin{description}
  \item[Node Removal.]%
    When a node leaves the system, its previous neighbors' evaluation
    criteria decrease, since the distance to the leaving node is infinite. As we are only concerned with fragments in which
    no disconnection can occur, let us consider actions taken by the
    protocol subsequently to the leaving of node $p$.  Just after the
    departure of $p$, every neighbor of $p$ saw a decrease in its
    evaluation function, and starts to look for a new neighbor. The
    algorithm used by CAN~\cite{CAN-thesis,CAN-sigcomm} is designed in
    such a way that the newly chosen neighbor is optimal with respect
    to the Cartesian distance. Hence, in the fragment following
    the leaving of $p$, the criterion for every neighbor of $p$
    increases.  Once each node which used to be a neighbor of the leaving node has completed  the protocol, then the topology of CAN does not change
    unless a connection or another disconnection occur.  Hence, the
    departure protocol leaves the system in a $p$-stable
    configuration.
  
  \item[Node Insertion.]%
    The insertion of a node is a two-step operation.  In the first
    step, the node $p$ that wants to join the system computes an
    $\mathit{id}$, which is a point in the $d$-torus, and then gets the IP
    address of some CAN node $q_0$. The second step is the actual
    insertion: (1)~$q_0$ sends a message to the node $q_1$ responsible
    for the volume containing the $\mathit{id}$ computed by $p$, then
    (2)~$p$ contacts $q_1$ which, in turn, contacts its neighbors and
    splits its volume in order to maximize the uniform distribution of
    nodes within the torus, and finally (3)~$p$ enters the system with
    a volume defined by its $\mathit{id}$ and by $q_1$ and its
    neighbors.

  \end{description}

  The key point here is that, for any node $r$ in the torus, when a
  new node $p$ is inserted in CAN, it becomes a neighbor of $r$ only
  if $p$ is closer to $r$ than one of $r$'s previous neighbors. Hence,
  the Cartesian distance from $r$ to its neighbors is either the same
  or reduced, when compared to the situation before the insertion: the
  evaluation criterion for every node in the system is improved by an
  insertion. Thus by Theorem~\ref{theo:weak}, this makes CAN weakly self-organized.
  \qed
\end{proof}

\subsection{Case Study of Weak Self-Organization: Pastry}
\label{sec:pastry}

Pastry~\cite{pastry} is a general substrate for the construction of peer-to-peer
Internet applications like global file sharing, file storage, group
communication and naming systems. Each node and key in the Pastry
peer-to-peer overlay network are assigned a 128-bit node identifier
($\mathit{nodeId}$). The nodeId is used to indicate a node's position
in a circular $\mathit{nodeId}$ space, which ranges from $0$ to
$2^{128}-1$. The $\mathit{nodeId}$ is assigned randomly when a node
joins the system. The distributed algorithms run upon node arrivals and
departures rely on local structures (i.e.,~routing table, leaf set, and
neighborhood set) to ensure that nodes are correctly positioned on the
circular $\mathit{nodeId}$ space. In the following we study Pastry self-organization by characterizing the local criterion used by any node $p$ to update the routing table, the leaf set and the neighborhood set. These local criterion are respectively denoted by $\gamma_p^\mathit{Pastry,routing}(.)$, $\gamma_p^\mathit{Pastry,neighbor}(.)$, and $\gamma_p^\mathit{Pastry,leaf}(.)$. 

\paragraph{Routing Table}
\label{sec:routing-table}
The routing table $R(p)$ of any node $p$ contains the $\mathit{nodeId}$s of the
nodes that share a common prefix with $p$ node.  More precisely, it is
organized into $\lceil \log_{2^b}N \rceil$ rows with $2^b-1$ entries
each.  Each of the $2^b-1$ entries at row~$\ell$ refers to a node whose
nodeId shares the first $\ell$ digits of $p$ nodeId, but whose
$\ell+1$th digit has one of the $2^b-1$ possible values other than the
$\ell+1$th digit of $p$ nodeId. If several nodes have the
appropriate prefix, the one which is the closest according to 
distance metric $\mathit{dist}$ to node~$p$ is chosen.
\noindent
Consider the following local criterion  at node $p$:
\begin{eqnarray}
  \gamma_p^\mathit{Pastry,routing}(q)\stackrel{\rm def}{=} \frac{f(i,j,k,n)}{\mathit{dist}(k,n)},
\end{eqnarray}
where $\mathit{dist}$ is a  a scalar proximity metric, and $f$ is defined in Relation~(\ref{rel:f}). 
Let $p$ and $q$ be two nodes such that $p=p_0 p_1 \ldots p_{\lfloor
  \log_{2^b}N-1 \rfloor}$, and $q=q_0 q_1 \ldots q_{\lfloor
  \log_{2^b}N-1 \rfloor}$
\begin{eqnarray}\label{rel:f}
  f(i,j,p,q) = 
  \left\{ \begin{array}{ll}
      1 & \mbox{ \ if $\bigwedge_{l=0}^{i-1} (p_l=q_l) \wedge (j=q_i)$ is true }\\
      0 & \mbox{ \ otherwise.}
    \end{array}    
  \right.
\end{eqnarray}

\begin{lemma}[Pastry, Routing]
   Pastry 
  is weakly
  self-organized with respect to $\gamma_p^\mathit{Pastry,routing}$
  objective function.
\end{lemma}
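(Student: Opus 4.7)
The plan is to mirror the proof given for CAN and to reduce the claim to Theorem~\ref{theo:weak}: it suffices to show that, for every node $p$, Pastry drives the system to a $p$-stable configuration with respect to $\gamma_p^{\mathit{Pastry,routing}}$. So first I would unpack what $p$-stability means for this criterion. Since $f(i,j,p,q)\in\{0,1\}$ is $1$ precisely when $q$'s identifier shares the first $i$ digits of $p$ and has $j$ as its $(i+1)$-th digit, $\gamma_p^{\mathit{Pastry,routing}}(q)$ is nonzero only when $q$ is an admissible candidate for the cell $(i,j)$ of the routing table $R(p)$, and in that case it is maximized on the candidate of minimum distance. Hence a configuration is $p$-stable iff, for each cell $(i,j)$, the entry stored in $R(p)$ is the closest (under $\mathit{dist}$) prefix-matching node among those visible within $p$'s two-hop neighborhood.

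Then, as in the CAN proof, I would inspect the two classes of dynamic actions. On a \emph{node insertion}, the joining node $p$ routes a join message through the overlay; each forwarder $q$ supplies its corresponding row to $p$, and symmetrically every forwarder whose $\ell$-th row admits the newcomer as a candidate updates that row if the newcomer strictly improves $\mathit{dist}$. Inside the static fragment that follows, each affected cell holds the closest known prefix-matching node, so for every node $r$ the value $\gamma_r^{\mathit{Pastry,routing}}$ either strictly increases (some cell was improved) or stays unchanged (no candidate beat the incumbent). On a \emph{node departure}, any node $r$ that held the leaver in a cell $(i,j)$ repairs the cell via the standard Pastry recovery rule, polling other nodes of its row or leaf set with the same prefix and selecting the closest one; this is exactly the greedy swap performed by rule $\mathcal{R}$ of Module~\ref{alg:LSA} on the criterion $\gamma_p^{\mathit{Pastry,routing}}$, and the argument used inside the proof of Theorem~\ref{lsa-theorem} applies: because the criterion is bounded in $[0,1]$ and every swap is strict, the sequence of swaps is finite and terminates in a $p$-stable configuration.

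Combining the two cases, inside each static fragment the local criterion at every node is monotone non-decreasing, which yields Property~\ref{prop:sp}, and after every dynamic action only finitely many swaps are needed to reach a $p$-stable configuration, which yields Property~\ref{prop:wlp}. Applying Theorem~\ref{theo:weak} then yields weak self-organization of Pastry for $\gamma_p^{\mathit{Pastry,routing}}$.

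The main obstacle, compared to CAN, is that the greedy replacement is not performed by a single visible rule but is distributed over the join and repair subprotocols together with auxiliary structures (leaf set, neighborhood set). I would therefore need to argue cleanly that whatever candidates these subprotocols exchange are exactly the two-hop neighbors relevant to the $(i,j)$ cell, so that the effective swap is the one prescribed by LSA; otherwise the best one can claim is the weaker heuristic form of $p$-stability discussed right after Module~\ref{alg:LSA}. Either way, since $\gamma_p^{\mathit{Pastry,routing}}$ is bounded and non-decreasing in each static fragment and strictly increases whenever a prefix-matching candidate closer than the current incumbent becomes known, both Properties~\ref{prop:sp} and~\ref{prop:wlp} remain valid, and the conclusion through Theorem~\ref{theo:weak} goes through.
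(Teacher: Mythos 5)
Your proposal is correct and follows essentially the same route as the paper: a case analysis on node arrival and node removal showing that Pastry's join and repair subprotocols only improve (or preserve) $\gamma_p^{\mathit{Pastry,routing}}$, followed by an appeal to Theorem~\ref{theo:weak}. Your explicit unpacking of what a $p$-stable configuration means for this criterion, and your invocation of the bounded-criterion termination argument from Theorem~\ref{lsa-theorem}, are slightly more careful than the paper's own informal treatment, and you correctly flag the same heuristic caveat (non-metric proximity, periodic table maintenance) that the paper acknowledges at the end of its proof.
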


\begin{proof} We show that the algorithms used for node arrival and node
removal perform actions that leave the system in a $p$-stable
configuration. 

\begin{description}
\item[Node Removal.] Suppose that some node $q$ leaves the system. Then,    the evaluation
  criterion $\gamma_p^\mathit{Pastry,routing}(q)$ of node $p$ is set to zero. Thus, node $p$  tries to update its failed routing table entry
  $R_\ell^d(p)$. This is achieved  by asking the node pointed by another entry
  $R_\ell^i(p)$ with $i \neq d$ for an appropriate reference.  If none of the nodes of the row have a
  pointer on a lived node, $p$ asks for nodes referred to row $l+1$,
  thereby casting a wider net. With high probability, this procedure
  eventually finds an appropriate node if such a node exits in the system.
  Thus, $p$ updates its evaluation criterion. By  
  Theorem~\ref{theo:weak}, the system reaches a $p$-stable
  configuration.
  
\item[Node Arrival.] Suppose that some node $p$ joins the system. Then by construction, $p$ initializes
  its state tables and informs other nodes of its presence. It is
  assumed that $p$  knows initially about a nearby Pastry node
  $q_0$, according to the proximity metric. The insertion algorithm
  enables the new node $p$ to find a node $q_\ell$ whose nodeId is
  numerically closest to $p$'s nodeId. Node $p$ obtains the $\ell$th row
  of its routing table from the node encountered along the path from
  node $q_0$ to node $q_\ell$ whose nodeId matches $p$ in the first $\ell-1$
  digits. Assuming the triangle inequality holds in the proximity
  space, an easy induction leads to the fact that the entries of the
  $\ell$th row of $q_\ell$ routing table should be close to $p$. Thus  
  $\gamma_p^\mathit{Pastry,routing}$ progressively
  increases.  We now show how the routing tables of the affected nodes
  are updated to take into account the arrival of node $p$. Once node
  $p$ reaches node $q_\ell$,  $p$ sends a copy of the $\ell$th row of its
  routing table to all the nodes that are pointed to in that row $\ell$. Upon
  receipt of such a row, a node checks whether node $p$ or one of the
  entries of that row are nearer than the existing entry (if one
  exists). In that case, the node replaces the existing entry by the
  new one.  Thus the evaluation criterion for all these nodes is
  improved by an insertion.  Note that even if the evaluation
  criterion is improved, due to the heuristic nature of the node
  arrival and the fact that the practical proximity metrics do not
  satisfy the triangle inequality, one cannot ensure that the routing
  table entries produced by a node arrival are the closest to the
  local node. To prevent that such an issue  leads to a deterioration of
  the locality properties, routing table maintenance is periodically triggered.
\end{description}
\qed
\end{proof}

\paragraph{Neighborhood Set}
\label{sec:neigb}
The neighborhood set contains the nodeIds and IP addresses of the
$|M|$ nodes that are closest according to the proximity metric of 
node $p$.  The typical value for the size of this set is $2^b$
or $2*2^b$.
\noindent
Consider the following local  criterion at node $p$:
\begin{eqnarray}
\gamma_p^\mathit{Pastry,neighbor}(q)\stackrel{\rm def}{=}\frac{1}{1+\mathit{dist}(p,q)}
\end{eqnarray}
where $\mathit{dist}$ is a scalar  proximity metric.

\begin{lemma}[Pastry, neighbor]
  Pastry  
   is weakly
  self-organized with respect to  $\gamma^\mathit{Pastry,neighbor}_p$
  objective function.
\end{lemma}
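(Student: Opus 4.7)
The plan is to mirror the structure of the preceding lemma for the routing table and appeal to Theorem~\ref{theo:weak}. It suffices to show that Pastry locally self-organizes at every node $p$ for the criterion $\gamma_p^\mathit{Pastry,neighbor}$; then weak self-organization follows immediately. I would break the argument into a static-fragment case (no C/D actions, so the neighborhood set is unchanged and both the safety and the $p$-stability conditions hold trivially) and the two kinds of dynamic events, exactly as in the CAN and Pastry-routing proofs.

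Node removal: when a neighbor $q$ of $p$ in the neighborhood set $M(p)$ disconnects, $\gamma_p^\mathit{Pastry,neighbor}(q)$ collapses (the distance becomes effectively infinite), so the local criterion drops. The plan is to invoke Pastry's neighborhood-set repair, which queries the surviving members of $M(p)$ for the contents of their own neighborhood sets and picks the closest candidate (with respect to the proximity metric $\mathit{dist}$) that is not already in $M(p)$. In the fragment following the disconnection, each successive replacement strictly improves $\gamma_p^\mathit{Pastry,neighbor}$ on the vacated slot, and because $|M|$ is bounded and the metric takes values in a well-ordered set of achievable distances, the procedure terminates in a $p$-stable configuration.

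Node arrival: when a new node $r$ joins, the join algorithm notifies the nodes in $r$'s initial vicinity. A node $p$ receiving such a notification compares $\mathit{dist}(p,r)$ with $\max_{q\in M(p)} \mathit{dist}(p,q)$; if $r$ is closer, $p$ evicts the farthest member and inserts $r$, strictly increasing $\gamma_p^\mathit{Pastry,neighbor}$, otherwise $M(p)$ is unchanged and the criterion does not decrease. In either subcase, no action of $p$ can lower $\gamma_p^\mathit{Pastry,neighbor}$, giving the safety property, and eventually $p$ reaches a configuration in which no neighbor of a neighbor improves the criterion, giving $p$-stability.

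The main obstacle I expect is the same subtlety noted in the routing-table proof: the proximity metric used by Pastry in practice is not guaranteed to satisfy the triangle inequality, so the heuristic maintenance cannot in general guarantee that $M(p)$ contains the $|M|$ globally closest nodes. However, local self-organization (Definition~\ref{def:weak-self-org}) only requires reaching a local maximum of $\gamma_p^\mathit{Pastry,neighbor}$, not a global one, and Pastry's periodic neighborhood-set gossip ensures that whenever a strictly closer candidate becomes visible in $p$'s two-hop neighborhood, rule $\mathcal{R}$ of Module~\ref{alg:LSA} fires. Assembling these observations gives the hypothesis of Theorem~\ref{theo:weak}, from which Properties~\ref{prop:sp} and~\ref{prop:wlp} follow, completing the proof.
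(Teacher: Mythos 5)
Your proposal is correct and follows essentially the same route as the paper: both arguments walk through Pastry's node-removal and node-arrival maintenance of the neighborhood set, show that each protocol action leaves $\gamma_p^\mathit{Pastry,neighbor}$ non-decreasing and eventually yields a $p$-stable configuration, and then invoke Theorem~\ref{theo:weak}. Your added remarks on termination (bounded $|M|$) and the triangle-inequality caveat are refinements the paper only makes in the routing-table case, but they do not change the structure of the argument.
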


\begin{proof} 
  We show that the algorithms used for node arrival and node removal
  perform actions that leave the system in a $p$-stable configuration.
  \begin{description}
  \item[Node Removal.] The neighborhood set is periodically checked. If
    some node $q$ does not respond to requests of node $p$ (leading to 
    decrease $\gamma_p^\mathit{Pastry,neighbor}(q)$), $p$ asks other
    members of this set to give it back their neighbor sets. Node $p$
    checks the distance of each of the new nodes, and inserts the node
    that is the closest to it. Thus, for each neighbor $q$ of $p$,  $\gamma_q^\mathit{Pastry,neighbor}$ increases.  Theorem~\ref{theo:weak}
    implies that the system reaches a $p$-stable configuration.
  \item[Node Arrival.]  As previously said, when a new node $p$ joins
    the system it contacts a node close to it. This node gives its
    neighborhood set to $p$ so that $p$ is able to initialize its own
    neighborhood set, increasing thus its objective function. Clearly, since $q_0$ is in the proximity of $p$, both
    have a close neighborhood. Then $p$ proceeds as explained here
    above, and transmits a copy of its neighborhood set to each of the
    nodes found in this set. Those nodes update their own set if $p$
    is an appropriate node. Thus the objective function  for all
    these nodes is improved by an insertion.
  \end{description} 
  This completes the proof of the lemma.
  \qed
\end{proof}

\paragraph{Leaf Set}
\label{sec:leaf-set}
The leaf set is the set of nodes with the $L/2$ numerically
closest larger nodeIds, and the $L/2$ nodes with numerically closest smaller nodeIds,
relative to $p$ nodeId. Typical values for $L$ are $2^b$ or $2*2^b$.
\noindent
Consider the following local criterion at node $p$:
\begin{eqnarray}
\gamma_p^\mathit{Pastry,leaf}(q)\stackrel{\rm def}{=}\frac{1}{1+\mathit{dist}_\mathit{nodeId}(p,q)}
\end{eqnarray}
where $\mathit{dist}$ is the proximity metric in the nodeId space.

\begin{lemma}[Pastry, leaf]
  The Pastry system with respect to the leaf set is a weak
  self-organizing system using the $\gamma^\mathit{Pastry,leaf}_p$
  criterion.
\end{lemma}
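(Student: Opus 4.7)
The plan is to mirror the structure of the two preceding Pastry lemmas and reduce the claim to Theorem~\ref{theo:weak} by establishing local self-organization of each node $p$ for $\gamma_p^\mathit{Pastry,leaf}$. Since the leaf criterion measures closeness in the (one-dimensional, circular) nodeId space rather than in a generic proximity space, the argument should in fact be cleaner here than for the routing table: the $L/2$ numerically nearest nodes above and below $p$ are uniquely defined, so $\preceq$ behaves as a total order on reachable leaf-set configurations and ``$p$-stable'' coincides with ``$p$ holds the genuinely closest live leaves''.

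First I would handle node removal. If some $q\in\mathrm{leafset}(p)$ leaves, then $\gamma_p^\mathit{Pastry,leaf}(q)$ drops, but this happens \emph{at} a C/D action, so it does not violate safety inside any static fragment. Pastry's leaf-set repair protocol then asks the remaining leaf members for their own leaf sets and inserts the numerically closest node found. The key observation is that, because leaves are nodes adjacent to $p$ on the nodeId ring, any candidate closer to $p$ than the current farthest leaf must lie in the leaf set of one of $p$'s surviving leaves, so the repair indeed locates the actual numerically nearest live node. After the replacement, no further local action by $p$ can improve $\gamma_p^\mathit{Pastry,leaf}$, yielding a $p$-stable configuration.

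Next I would handle node arrival. When a new node $p$ joins, the Pastry bootstrap routes toward the numerically closest existing node $q_\ell$ and initializes the leaf set of $p$ from that of $q_\ell$; $p$ then announces itself to all affected nodes. For any such node $r$, the arrival of $p$ either (i)~leaves $\gamma_r^\mathit{Pastry,leaf}$ unchanged if $p$ is not numerically closer than any of $r$'s current leaves, or (ii)~strictly improves $\gamma_r^\mathit{Pastry,leaf}$ if $p$ displaces a farther leaf. In both cases Property~\ref{prop:sp} is satisfied for every static fragment following the arrival, and for at least one node the criterion strictly increases, providing the witness $f_j$ required by Property~\ref{prop:wlp}.

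The main obstacle I expect is the classical Pastry subtlety in which concurrent joins ``race'' around the ring so that a node's freshly obtained leaf set is temporarily not closed under mutual awareness, i.e.\ a closer node exists but has not yet been discovered. To close the argument cleanly I would invoke the periodic leaf-set stabilization that Pastry runs in the background: inside a sufficiently long static fragment, this periodic action guarantees that each $p$ eventually polls its current leaves and adopts any strictly closer nodeId discovered transitively, so a $p$-stable configuration is reached in finitely many steps. With local self-organization established for every $p$, Theorem~\ref{theo:weak} yields weak self-organization of Pastry with respect to $\gamma_p^\mathit{Pastry,leaf}$.
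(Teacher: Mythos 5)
Your proposal is correct and follows essentially the same route as the paper's proof: a case analysis on node removal and node arrival showing that Pastry's leaf-set maintenance restores a $p$-stable configuration, followed by an appeal to Theorem~\ref{theo:weak}. The only differences are matters of emphasis --- you discuss the concurrent-join race and periodic leaf-set stabilization where the paper instead notes the ``unless $L/2$ leaves fail simultaneously'' caveat --- but both arguments operate at the same level of detail and rely on the same reduction.
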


\begin{proof} 
  We show that the algorithms used for node arrival and node removal
  perform actions that leave the system in a $p$-stable configuration.
  \begin{description}
  \item[Node Removal.]  Suppose that some node $q$ leaves or fails. Then for all nodes $p$ that are neighbors of $q$, we have  $\gamma_p^\mathit{Pastry,leaf}(q)$ decreased. To
    replace a node in the leaf set, $p$ contacts the node with the largest index with respect to 
     $q$, and asks that node for its leaf
    set.  This leaf set may overlap $p$ leaf set and may contain nodes which are not in $p$ leaf set.  Among these new nodes, $p$ chooses
    the appropriate one to insert into its leaf set. Thus, unless
    $L/2$ fail simultaneously, then $p$ evaluation criterion
    increases.
  \item[Node Arrival.] The same argument as above applies, except
    that the node that gives to $p$ its leaf set is no more node $q_0$
    but $q_n$, since $q_n$ has the closest existing nodeId with respect to
    $p$. Then $p$ proceeds as explained above, and transmits a
    copy of its leaf set to each of the nodes found in this set. Those
    nodes update their own set if $p$ is an appropriate node. Thus the
    objective function  for all these nodes is improved by an
    insertion.
  \end{description} \qed
\end{proof}

\subsection{Self-Organization}
\label{sec:self-org}
As previously described, the weak self-organization definition applies to static fragments.
Nothing is guaranteed during dynamic ones (i.e.,~fragments in which
connections / disconnections occur or data are modified).  For
example, Pastry self-organization protocol may cause the creation of
multiple, isolated Pastry overlay networks during periods of IP
routing failures. Because Pastry relies almost exclusively on
information exchanged within the overlay network to self-organize,
such isolated overlays may persist after full IP connectivity resumes
\cite{DR01}. 

A self-organized system should be able to infinitely often increase its knowledge even in presence of churn. We characterize this  gradual enrichment through the safety property as defined above and a liveness
property that  says that either (1)~infinitely often, there
are static fragments during which the knowledge of the system enriches
(cf. Fig.~\ref{fig:liveness}), or (2)~all the processes have reached a
stable state.

\begin{figure}
  \centering
  \begin{pspicture}(0,0)(\pu,0.25\pu)
    \psline{->}(0,.1\pu)(0,.2\pu) \uput[u](0,.2\pu){$\gamma$}
    \psline{->}(.85\pu,0)(.95\pu,0) \uput[r](.95\pu,0){$t$}

    \pszigzag[coilarm=0mm,coilwidth=2mm,linewidth=.5pt]{-}(.05\pu,.16\pu)(.15\pu,.15\pu)
    \psline{|-|}(.15\pu,.15\pu)(.35\pu,.17\pu)
    \uput[u](.25\pu,.16\pu){$\forall f_i$}
    \pszigzag[coilarm=0mm,coilwidth=2mm,linewidth=.5pt]{-}(.35\pu,.17\pu)(.45\pu,.15\pu)

    \pscurve{-}(.45\pu,.2\pu)(.44\pu,.17\pu)(.45\pu,.13\pu)(.46\pu,.09\pu)(.45\pu,.05\pu)
    \uput[u](.49\pu,.125\pu){$\cdots$}
    \uput[d](.51\pu,.125\pu){$\cdots$}
    \pscurve{-}(.54\pu,.2\pu)(.53\pu,.17\pu)(.54\pu,.13\pu)(.55\pu,.09\pu)(.54\pu,.05\pu)

    \pszigzag[coilarm=0mm,coilwidth=2mm,linewidth=.5pt]{-}(.55\pu,.07\pu)(.65\pu,.05\pu)
    \psline{|-|}(.65\pu,.05\pu)(.85\pu,.23\pu)
    \uput[dr](.75\pu,.14\pu){$\exists f_j$}
    \pszigzag[coilarm=0mm,coilwidth=2mm,linewidth=.5pt]{-}(.85\pu,.23\pu)(.95\pu,.2\pu)

    \psline[linestyle=dotted]{-}(.85\pu,.23\pu)(.33\pu,.23\pu)
    \psline{<->}(.35\pu,.17\pu)(.35\pu,.23\pu)

    \uput[u](.85\pu,.23\pu){\footnotesize $\fin(f_j)$}
    \uput[d](.35\pu,.17\pu){\footnotesize $\fin(f_i)$}
  \end{pspicture}
  \caption{Illustration of the liveness property.}
  \label{fig:liveness}
\end{figure}
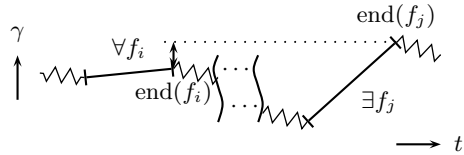

\begin{definition}[Self-Organization]\label{def:self-org}
  Let $\mathcal{S}$ be a system and $\gamma$ be a global evaluation
  criterion defined on the configurations of $\mathcal{S}$.  A system
  is \emph{self-organizing} for $\gamma$ if both safety as defined in Property~\ref{prop:sp} and liveness hold, with the liveness property defined in Property~\ref{prop:lp}.

  \begin{property}[Liveness Property]\label{prop:lp}
    \begin{eqnarray*}
      \forall e\!=\!(f_0, \ldots, f_i, \ldots, f_j, \ldots), 
      \forall f_i\!\in\!e,
      \exists f_j\!\in\!e, && j\!\geq\!i: 
      \gamma(\fin(f_i))\!\prec\!\gamma(\fin(f_j)) \\
      && \mbox{\textsf{or} } \forall p\!\in\!\mathcal{S}, \debut(f_j)
      ~\mbox{is \ensuremath{p}-stable}
    \end{eqnarray*}
  \end{property}
  
\end{definition}

\begin{theorem}[Self-organization]
  \label{theo:selforg}
  Let $\mathcal{S}$ be a locally self-organizing system.  If, for any
  execution $e = (f_0, \ldots, f_i, \ldots, f_j, \ldots)$ of
  $\mathcal{S}$ and for all static fragments $f_i$ and $f_{i+1}$ in execution $e$ we have 
  $\gamma(\fin(f_i)) \leq \gamma(\debut(f_{i+1}))$ then, $\mathcal{S}$
  is self-organizing.
\end{theorem}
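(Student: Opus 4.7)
The plan is to derive the two clauses of Definition~\ref{def:self-org}, namely Safety (Property~\ref{prop:sp}) and Liveness (Property~\ref{prop:lp}), by boot\-strapping from the weak self-organization result already proved in Theorem~\ref{theo:weak} and then using the extra hypothesis $\gamma(\fin(f_i))\preceq\gamma(\debut(f_{i+1}))$ to ``glue'' successive static fragments together.

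\medskip

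First I would observe that Safety comes for free. By hypothesis $\mathcal{S}$ is locally self-organizing, hence Theorem~\ref{theo:weak} applies and $\mathcal{S}$ is weakly self-organizing for $\gamma$. In particular the safety clause stated as Property~\ref{prop:sp} is already established: for every execution $e$ and every static fragment $f\in e$, $\gamma(\debut(f))\preceq\gamma(\fin(f))$. No further work is needed for Property~\ref{prop:sp}.

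\medskip

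For Property~\ref{prop:lp}, fix an execution $e=(f_0,\ldots,f_i,\ldots)$ and an arbitrary static fragment $f_i\in e$. Since $e$ is an infinite sequence of fragments, $f_{i+1}$ exists. I would apply the weak liveness Property~\ref{prop:wlp} to $f_{i+1}$, obtaining a fragment $f_j$ with $j\geq i+1$ such that either (a)~$\debut(f_j)$ is $p$-stable for every $p\in\mathcal{S}$, in which case the corresponding disjunct of Property~\ref{prop:lp} is satisfied and we are done, or (b)~$\gamma(\debut(f_j))\prec\gamma(\fin(f_j))$. The whole point of applying weak liveness to $f_{i+1}$ rather than $f_i$ is to guarantee $j>i$, so that the strict improvement happens \emph{after} $f_i$ and not inside it; otherwise $j=i$ would force the meaningless inequality $\gamma(\fin(f_i))\prec\gamma(\fin(f_i))$.

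\medskip

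In case (b) I would chain inequalities along the sequence $f_i,f_{i+1},\ldots,f_j$. For every index $k$ with $i\leq k<j$, Safety yields $\gamma(\debut(f_k))\preceq\gamma(\fin(f_k))$ and the theorem's hypothesis yields $\gamma(\fin(f_k))\preceq\gamma(\debut(f_{k+1}))$. Composing these two relations from $k=i$ up to $k=j-1$ gives $\gamma(\fin(f_i))\preceq\gamma(\debut(f_j))$. Combining with $\gamma(\debut(f_j))\prec\gamma(\fin(f_j))$ from case~(b) and using transitivity of $\preceq/\prec$ produces $\gamma(\fin(f_i))\prec\gamma(\fin(f_j))$, which is exactly the first disjunct of Property~\ref{prop:lp}. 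This finishes the proof of Liveness and therefore of the theorem.

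\medskip

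The argument is essentially bookkeeping; the only subtle point, and the one place where a careless proof would go wrong, is the need to invoke weak liveness on $f_{i+1}$ rather than $f_i$ so that the strict $\prec$ obtained from weak liveness can actually be transported across fragment boundaries into a strict comparison between $\fin(f_i)$ and $\fin(f_j)$. Everything else is immediate from Theorem~\ref{theo:weak} and the inter-fragment monotonicity assumption.
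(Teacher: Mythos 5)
Your proof is correct and follows essentially the same route as the paper's: invoke weak self-organization for safety, then chain the intra-fragment safety inequalities with the inter-fragment hypothesis $\gamma(\fin(f_k))\preceq\gamma(\debut(f_{k+1}))$ to transport the strict improvement from weak liveness into $\gamma(\fin(f_i))\prec\gamma(\fin(f_j))$. You are in fact slightly more careful than the paper, which starts its chain at $\gamma(\debut(f_i))$ and does not address the $j=i$ degenerate case or the $p$-stable disjunct explicitly; your device of applying weak liveness to $f_{i+1}$ cleanly handles both.
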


\begin{proof}
  From the local self-organization of $\mathcal{S}$, for all $f_i$,
  it exists $f_j$ such that $\gamma(\debut(f_j)) \prec
  \gamma(\fin(f_j))$.  Using the hypothesis, we have $$\gamma(\debut(f_i))
  \preceq \gamma(\fin(f_i)) \preceq \gamma(\debut(f_{i+1})) \ldots
  \linebreak \preceq \gamma(\debut(f_j)) \prec
  \gamma(\fin(f_j)).$$ Thus $\mathcal{S}$ is self-organizing.  \qed
\end{proof}

In the following we show that the eventual leader algorithm proposed in~\cite{MRTPAA05} is self-organizing. 

\subsection{Case Study of Self-Organization: Eventual Leader}

Informally, an eventual leader service provides nodes with the identity of a node, and eventually the  identity returned by the service is the identity of a node that will remain in the system until the end of its execution. Implementing this service has received a lot of attention both in practical and theoretical area. In this work, we focus on a solution proposed by 
Raynal et al.~\cite{MRTPAA05}.  Their solution focuses on a dynamic system. 
The churn model assumed in their solution  is characterized by a  bounded dynamic demon. Specifically, 
it is assumed that there is a  set of nodes, \textsc{stable}, such that each node in \textsc{stable} after having joined the system never leaves the system nor fails. The size $\alpha$ of this set  is common knowledge, that is, each node in the system knows  that  there is a time after which $\alpha$ nodes  will never leave nor join the system. Note however that identities of these nodes are not known. 

The proposed algorithm implements the leader oracle
$\Omega$ through function \texttt{leader()}. Oracle $\Omega$  satisfies the following property :

 \begin{description}
  \item[\textbf{Eventual leadership problem~\cite{MRTPAA05} :}]
There is a time $t$ and a node  $\ell$ such that $\ell \in \textsc{stable}$, 
and after $t$, every invocation of \texttt{leader()} by any node $p$ returns $\ell$.
\end{description}

The idea of their algorithm is as follows. 
Each node $i$ maintains a set of nodes identities, denoted $\emph{trust}_i$, timestamped with a logical date locally computed ($\emph{date}_i$). Initially $\emph{trust}_i=\Pi$, with $\Pi$ the full universe of node identifiers. The aim of the algorithm is that eventually $\emph{trust}_i$ is the same for each node $i$. Specifically node $i$   repeatedly queries all the nodes of the system to determine which nodes are up and waits until it receives the first $\alpha$ responses. Then $i$ updates $\emph{trust}_i$ by making the intersection between its local $\emph{trust}_{i}$ set and the responses it has received from the other nodes. It then broadcasts  $\emph{trust}_i$ (through the invocation of a reliable broadcast primitive) to all the nodes in the system. 
A node $j$ that receives a \emph{trust} set either adopts it 
if the logical date of $\emph{trust}_{j}$ is older than the received one, or updates it by considering 
the intersection of the local and received \emph{trust} sets. If after $j$'s update, $\emph{trust}_j$
is empty then $j$ starts a new epoch with $\emph{trust}_j=\Pi$.
When \texttt{leader()} is invoked on $i$ then if   $\emph{trust}_i \neq \emptyset$ or $\emph{trust}_i \neq \Pi$ then the
smallest identifier belonging to $\emph{trust}_i$ is returned, otherwise $i$ is returned.

We now show that  the eventual leader algorithm of Raynal et al.~\cite{MRTPAA05} is self-organizing with respect to the following property:  ``Eventually \texttt{leader()}  returns the same node
to every node belonging to  the system''. 
Consider the following local criteria at node $p$:
\begin{eqnarray}\label{eq:gammaleader}
\gamma_p(t) \stackrel{\rm def}{=} (\emph{trust}_p(t),\emph{date}_p(t))
\end{eqnarray}
where $\emph{trust}_p(t)$ is the set of trust processes 
at time $t$ and $\emph{date}_p(t)$ is the logical date of the \emph{trust} set at time $t$.
\noindent
Define the partial order $\preceq$ as follows:
\begin{eqnarray}\label{eq:partialorderleader}
\gamma_p(t) \preceq \gamma_p(t^\prime) \ \mbox{if and only if} \ \emph{trust}_p(t^\prime) \subseteq \emph{trust}_p(t) 
\vee \emph{date}_p(t) \leq \emph{date}_p(t^\prime)
\end{eqnarray}

\begin{lemma}
The eventual leader algorithm \cite{MRTPAA05} is self-organizing 
with respect to  $\gamma_p$ as defined in Relation~(\ref{eq:gammaleader}) equipped with the partial order defined in Relation~(\ref{eq:partialorderleader}).
\end{lemma}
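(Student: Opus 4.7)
The plan is to apply Theorem~\ref{theo:selforg}, which reduces the proof of self-organization to (i)~establishing local self-organization for every node $p$ with respect to $\gamma_p$, and (ii)~checking that the global criterion $\gamma$ does not decrease across the dynamic actions that separate any two consecutive static fragments $f_i$ and $f_{i+1}$. Because the partial order defined in Relation~(\ref{eq:partialorderleader}) is a disjunction (``trust shrinks'' \emph{or} ``date progresses''), both parts of the argument reduce to observing that the algorithm of~\cite{MRTPAA05} never regresses in either component.

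For local self-organization, I would argue as follows. Within a static fragment, the only modifications a node~$p$ applies to its local state are (a)~replacing $\text{trust}_p$ by its intersection with a received trust set, or (b)~starting a fresh epoch when $\text{trust}_p$ becomes empty. Case~(a) yields $\text{trust}_p(t') \subseteq \text{trust}_p(t)$, and case~(b) strictly increases $\text{date}_p$; both satisfy $\gamma_p(t) \preceq \gamma_p(t')$. Under the bounded dynamic demon assumed in~\cite{MRTPAA05}, there is a finite prefix after which only the nodes in \textsc{stable} remain, so eventually the system enters an infinite static fragment. Within it, the reliable-broadcast/intersection scheme drives every $\text{trust}_p$, $p \in \textsc{stable}$, to a common fixpoint that contains \textsc{stable}; once this fixpoint is reached no rule of the algorithm fires, so the configuration is $p$-stable for every node~$p$. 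This establishes local self-organization and, by the definition given in Section~\ref{sec:fromlocaltoself}, self-organization of $\mathcal{S}$ in the local sense needed to invoke Theorem~\ref{theo:selforg}.

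For the inter-fragment hypothesis of Theorem~\ref{theo:selforg}, I would examine each kind of dynamic action occurring between $\fin(f_i)$ and $\debut(f_{i+1})$. A node that persists contributes an unchanged $\gamma_p$ value to the aggregate; a disconnecting node leaves the system and therefore no longer contributes. A freshly connected node~$p$ enters with $\text{trust}_p = \Pi$, which is the top element of the subset order, so every future intersection satisfies $\text{trust}_p(t') \subseteq \Pi$ and contributes non-decreasingly to $\gamma$ from $\debut(f_{i+1})$ onward. Consequently the global criterion, aggregated over the active nodes, satisfies $\gamma(\fin(f_i)) \preceq \gamma(\debut(f_{i+1}))$.

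The main obstacle I expect lies in the join case: one must argue cleanly that reinitializing a joining node's state with $(\Pi, \cdot)$ is compatible with the aggregate partial order on $\gamma$, i.e., that newcomers do not induce a global regression of $\gamma$. Once this is pinned down, combining the two items above with Theorem~\ref{theo:selforg} yields the conclusion that the eventual leader algorithm of~\cite{MRTPAA05} is self-organizing with respect to $\gamma_p$ equipped with the partial order of Relation~(\ref{eq:partialorderleader}).
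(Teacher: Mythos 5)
Your proposal is essentially sound, but it takes a genuinely different route from the paper, and that route saddles you with an extra obligation that the paper never incurs. The paper does not invoke Theorem~\ref{theo:selforg} at all: it verifies Definition~\ref{def:self-org} directly, checking Property~\ref{prop:sp} by the same case analysis you give (within a static fragment the algorithm either intersects $\emph{trust}_p$ with a received set, adopts a more recent date, or increments the date on an empty trust set, so $\gamma_p$ never regresses under the disjunctive order), and checking Property~\ref{prop:lp} via its \emph{second} disjunct: by correctness of the algorithm under the bounded/finite-arrival demon, every $\emph{trust}_p$ eventually freezes, so a $p$-stable configuration is reached for all $p$. Since Property~\ref{prop:sp} quantifies only over static fragments and Property~\ref{prop:lp} can be discharged through eventual $p$-stability, no comparison of $\gamma$ across dynamic actions is ever needed.

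By contrast, your detour through Theorem~\ref{theo:selforg} forces you to establish $\gamma(\fin(f_i)) \preceq \gamma(\debut(f_{i+1}))$ across C/D actions, and the difficulty you flag there is real: the global criterion is an aggregate over a \emph{changing} set of nodes, and a joiner reinitialized to $(\Pi,\cdot)$ sits at the bottom, not the top, of the order induced by Relation~(\ref{eq:partialorderleader}) (smaller trust sets are \emph{larger} in $\preceq$, since $\gamma_p(t) \preceq \gamma_p(t')$ when $\emph{trust}_p(t') \subseteq \emph{trust}_p(t)$). So your claim that $\Pi$ is ``the top element'' has the orientation backwards, and making the aggregate comparison well-defined when the node set itself changes is not pinned down. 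None of this is fatal to the lemma --- it only means you chose a sufficient condition stronger than what is needed. If you replace the appeal to Theorem~\ref{theo:selforg} by a direct verification of Properties~\ref{prop:sp} and~\ref{prop:lp}, your first two paragraphs already contain the whole proof and the join-case obstacle evaporates.
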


\begin{proof} 

Proof of the lemma is done by showing that the eventual leader algorithm satisfies Properties~\ref{prop:sp} and~\ref{prop:lp}.
\begin{description} 
\item[Safety]
Consider a static fragment. Three cases are possible: case \emph{i}) during that fragment, both the received  \emph{trust} set  and the local one have been updated during the same epoch. From the algorithm, the updated local  \emph{trust} set  cannot be a superset of the old one. Case \emph{ii}) during that fragment  the 
received trust set is more recent than the local one. From the algorithm, the local logical date is updated with the received one. Finally, case \emph{iii)} during that fragment, the local \emph{trust} set is empty. From the algorithm, the local date is incremented. Therefore, in all cases,  for any static fragment we have $\gamma_p(t) \preceq \gamma_p(t^\prime)$ for any $t^\prime >t$, which meets Property~\ref{prop:sp}.
\item[Liveness ] By correctness of the eventual leader algorithm~\cite{MRTPAA05} and the finite arrival churn model it 
follows that the set \emph{trust} eventually contains the same set of processes.
That is, there is a time $t$ such that $\emph{trust}_p(t^\prime)$ is stable for any $t^\prime>t$.
Since, $\emph{trust}_p(t^\prime)$ does not change, the logical date does not change either.  
Hence, $\gamma_p$ eventually reaches a $p$-stable configuration for any $p$ in the system. 
Therefore, the system is locally self-organizing which proves Property~\ref{prop:lp}. 
\end{description}
This completes the proof of the lemma. \qed
\end{proof}

\begin{corollary}
The eventual leader algorithm of Raynal et al.~\cite{MRTPAA05} is self-organizing with respect to  property:  ``Eventually \texttt{leader()}  returns the same node
to each node belonging to  the system''. 
\end{corollary}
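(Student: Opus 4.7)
The plan is to deduce the corollary directly from the preceding lemma together with the Eventual Leadership specification of \texttt{leader()} established by Raynal et al.~\cite{MRTPAA05}. The lemma already establishes self-organization with respect to the pair $\gamma_p=(\emph{trust}_p,\emph{date}_p)$; what remains is to translate that result into the agreement predicate ``eventually \texttt{leader()} returns the same node on every node.''

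First, I would re-use $\gamma_p$ from Relation~(\ref{eq:gammaleader}) as the local criterion and lift it to a global criterion $\gamma$ by aggregation, as prescribed in Section~\ref{sec:global_evaluation_criterion}. The safety property~\ref{prop:sp} is then inherited directly from the lemma: in every static fragment either $\emph{trust}_p$ only shrinks within an epoch, or $\emph{date}_p$ only grows (on reception of a fresher broadcast or at the start of a new epoch), so the partial order of Relation~(\ref{eq:partialorderleader}) is preserved at every node and $\gamma$ is therefore non-decreasing.

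Next, for the liveness property~\ref{prop:lp} I would invoke the correctness of the algorithm together with the bounded dynamic demon assumption ($|\textsc{stable}|=\alpha$ with $\alpha$ common knowledge). These guarantee a time $t^\star$ after which $\emph{trust}_p$ is identical on every node in \textsc{stable} and contains the eventual leader $\ell$. From that point on, \texttt{leader()} returns the smallest identifier in the common $\emph{trust}$ set, namely $\ell$, at every node. Combined with the local self-organization result of the lemma, every node eventually reaches a $p$-stable configuration, satisfying the second disjunct of Property~\ref{prop:lp}.

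The main obstacle I anticipate is bridging the gap between the quantitative criterion $\gamma_p$ and the qualitative agreement predicate ``the same leader is returned everywhere.'' I would handle this by observing that a $p$-stable configuration for $\gamma_p$ freezes $\emph{trust}_p$, which in turn fixes the value returned by \texttt{leader()} invoked on $p$; the correctness result from~\cite{MRTPAA05} then guarantees that these fixed outputs coincide across all nodes in \textsc{stable}. Thus self-organization for $\gamma_p$ at every node, conjoined with the Eventual Leadership specification, yields self-organization with respect to the stated property, completing the corollary.
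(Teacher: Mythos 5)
Your proposal is correct and follows essentially the same route the paper intends: the corollary is stated as an immediate consequence of the preceding lemma (self-organization for $\gamma_p$ under the partial order of Relation~(\ref{eq:partialorderleader})) combined with the Eventual Leadership correctness of~\cite{MRTPAA05}, which is exactly the derivation you spell out. Your explicit bridging step --- that a $p$-stable configuration freezes $\emph{trust}_p$ and hence the output of \texttt{leader()}, with correctness forcing these frozen outputs to coincide --- is a sound and somewhat more careful elaboration of what the paper leaves implicit.
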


\subsection{Strong Self-Organization}
\label{sec:strong-self-org}

The specification of both  the liveness and the weak liveness properties  do not proscribe 
nodes to reset their neighbors lists after 
C/D actions.  To prevent the system from ``collapsing''
during dynamic periods, each  node whose
neighborhood is not impacted by churn should keep local information. Such a property would  ensure that for those nodes  the global evaluation criterion restricted to these nodes  would be at least  
maintained between the end of a static fragment and the beginning of
the subsequent one. In the following this group of nodes is called the kernel
of the system.  More precisely, given two successive configurations
$c_i$ and $c_{i+1}$ with their associated graphs $G_i$ and $G_{i+1}$,
the static common core of $G_i$ and $G_{i+1}$ is the sub-graph common
to $G_i$ and $G_{i+1}$ minus all nodes for which the neighborhood has
changed. Formally, 

\begin{definition}[Topological Kernel]
Let $G1$ and $G2$ be two graphs, and
$\Gamma_{G_i}(p)$ be the set of neighbors of node $p$ in $G_i$, with $i=1,2$.  Let $\mathit{KerT}(G_1,G_2)$  be the 
topological static common core of $(G1,G2)$. Then  $$\mathit{KerT}(G_1,G_2)\stackrel{\rm def}{=}G_1\cap G_2 \setminus\{p:\Gamma_{G_1}(p)
  \neq \Gamma_{G_2}(p)\}$$
\end{definition}

Since we study systems in which self-organization may be data-oriented
(typically peer-to-peer systems), we propose a data-oriented
definition of the static core of the system. That is, given two
successive configurations $c_i$ and $c_{i+1}$, the data static common
core of $c_i,c_{i+1}$ is:

\begin{definition}[Data Kernel]
  $\mathit{KerD}(c_{i},c_{i+1})\stackrel{\rm def}{=}D_i\cap D_{i+1}$, where $D_i$ is the
  system data in $c_i$.
\end{definition}

\noindent
In the following, 
  $\mathit{Ker}^\ast$ denotes either $\mathit{KerT}$ or
  $\mathit{KerD}$. We can now state the following  property :

\begin{property}[Kernel Preservation Property]\label{prop:kernel}
  Let $\mathcal{S}$ be a system and $\gamma$ be a global evaluation
  criterion defined on the configurations of $\mathcal{S}$.  Let
  $e\!=\!(f_0, \ldots f_i,\!f_{i+1}, \ldots)$ be an execution of
  $\mathcal{S}$ and let
  $K_i=\mathit{Ker}^\ast(\fin(f_i),\debut(f_{i+1}))$.  Then, 
 \begin{eqnarray*}
\forall i,\,\,
    \gamma(\mathit{Proj}_{|K_i}(\fin(f_i))) \preceq
    \gamma(\mathit{Proj}_{|K_i}(\debut(f_{i+1})))
    \end{eqnarray*}
    where
    $\mathit{Proj}_{|K_i}(c)$ is the sub-conf\-ig\-ur\-ation of $c$
    corresponding to the kernel $K_i$.
\end{property}

\noindent
We have

\begin{definition}[Strong Self-Organization]
  \label{def:strong-self-org}
  Let $\mathcal{S}$ be a system and $\gamma$ be a global evaluation
  criterion defined on the configurations of $\mathcal{S}$.
  $\mathcal{S}$ is \emph{strongly self-organizing} for $\gamma$ if it
  is self-organizing for  $\gamma$ (cf Definition~\ref{def:self-org}) and it verifies Property~\ref{prop:kernel}.
\end{definition}

In the following we show that the One shot query algorithm presented in~\cite{baldoni07} is strongly self-organized. 

\subsection{Case Study of Strong Self-Organization: One Shot Query}

We illustrate the notion of strong-self organization through the one-shot query problem. 
The
one-time query problem, originally defined by Bawa et al.~\cite{bawa}, can
be informally described as follows.  The system is made of many nodes interconnected
through an arbitrary topology. Data is distributed among a subset of the
nodes. Some node in the system issues a query in order to
aggregate the distributed data, without knowing where the data
resides, how many processes hold the data, or even whether any data
actually matches the query. Formally, let \texttt{query(Q)} denote the operation a node invokes to aggregate
  the set of values $V=\{v_1,v_2 \ldots\}$ present in the system and
  that match the query.  Then a solution to the one shot query problem should satisfy the following two properties~\cite{bawa}: 

\begin{description}
\item[One shot query problem\cite{bawa}] Let $p$ be some node that invokes a \texttt{query(Q)} operation at time $t$.
\begin{itemize}
\item \emph{Termination Property } The \texttt{query(Q)}  operation completes in a finite time $\delta$
\item \emph{Validity Property } The set  of data $V$ returned by the \texttt{query(Q)} 
    operation includes at least all the values held by any  node $q$ such that $q$ was connected to $p$  through a sub-graph of the graph that represents the (physical) network from time $t$ to $t+\delta$.
    \end{itemize} 
\end{description}

In this paper, we consider  the solution proposed by  Baldoni et al.~\cite{baldoni07}. Their solution focuses on a dynamic system subject to a restricted kernel-based dynamic demon (cf Section~\ref{sec:dynamicschedulers}). Specifically, they assume that the physical communication 
graph $G$ guarantees a bounded  diameter not known in advance. Note that in their work, the logical graph matches $G$. In addition, they augment the system 
with a perfect failure detector~\cite{CT96}, that is a distributed oracle that allows each node to reliably update the view of its neighborhood (i.e., when node $p$  invokes its  perfect failure detector, $p$ learns which of its neighbors have  failed (or are disconnected)). 

In this model, the authors propose a DFS-based algorithm that
solves  the one-shot query specification~\cite{baldoni07}. Specifically, when  node $q$ receives a \texttt{query(Q)}  message, $q$ checks whether one of its neighbors has not already received this query  by checking both the \emph{querying} set (i.e., the set of nodes that have already sent the \texttt{query(Q)}  message and are waiting for the replies from their neighborhood), and the \emph{replied} set (i.e., the set of nodes that have provided their value upon receipt of the \texttt{query(Q)}  message). In the affirmative, $q$ sends to the first of them the \texttt{query(Q)}  message and waits until either it receives a reply from  that node, or that node is detected failed by $q$ failure detector. Node $q$ proceeds similarly with all its other neighbors. Then $q$ sends back a reply message with the set $V$ of values and the   updated \emph{replied} set  or, if node $q$ is the initiator of the query it returns only the set $V$ of values. 

 In the following we show that the DFS one-shot query algorithm of Baldoni et al.~\cite{baldoni07} is strongly self-organizing with respect to the following property: ``The initiator of a query receives at least all the values held by nodes that are connected to it through a subgraph of $G$  during the whole  query process''.

Consider the local evaluation criterion defined as follows: 
\begin{eqnarray}\label{eq:gamma_query}
\gamma_p \stackrel{\rm def}{=}\frac{\mid \emph{replied}_p \mid}{\mid \emph{target}_p \mid \setminus
\mid \emph{noResponseButFailed}_p \mid}
\end{eqnarray}
  where the set $\emph{replied}_{p}$ contains the set of nodes that answered the query,   the set $\emph{target}_p$ encompasses  the set of $p$'s neighbors (including $p$ itself) the first time $p$ received the query message, and  $\emph{noResponsebutFailed}_p$ is the set of nodes that have not yet replied but have failed since   $\emph{target}_p$ was initialized.

\begin{lemma}
  The one-query algorithm proposed by Baldoni et al~\cite{baldoni07} is strongly self-organizing
  with respect to  $\gamma_p$ as defined in Relation~(\ref{eq:gamma_query}).
\end{lemma}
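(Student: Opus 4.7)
The plan is to verify the three ingredients required by Definition~\ref{def:strong-self-org} for $\gamma_p$ of Relation~(\ref{eq:gamma_query}): the safety Property~\ref{prop:sp}, the liveness Property~\ref{prop:lp}, and the Kernel Preservation Property~\ref{prop:kernel}.

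First I would handle safety by fixing a static fragment $f$ and tracking how the three sets entering $\gamma_p$ evolve during $f$. Because no C/D action occurs inside $f$, the perfect failure detector never reports a new failure, so $\emph{noResponseButFailed}_p$ stays constant; the set $\emph{target}_p$ is frozen at the first reception of the query message by~$p$ and is never updated afterwards; finally, the set $\emph{replied}_p$ can only grow, as the DFS algorithm solely inserts into it upon receipt of a reply. The denominator of $\gamma_p$ is therefore constant while the numerator is non-decreasing, which gives $\gamma_p(\debut(f)) \preceq \gamma_p(\fin(f))$. Aggregating over all nodes and invoking the global evaluation criterion property of Section~\ref{sec:global_evaluation_criterion} yields $\gamma(\debut(f)) \preceq \gamma(\fin(f))$, which is Property~\ref{prop:sp}.

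For liveness I would rely on the termination guarantee of the DFS algorithm of~\cite{baldoni07}: under a kernel-based dynamic demon with bounded (though unknown) diameter and in the presence of a perfect failure detector, every invocation of \texttt{query(Q)} completes in finite time. When $p$'s local DFS instance ends, every $q \in \emph{target}_p$ has either answered (and thus joined $\emph{replied}_p$) or has been detected as faulty (and thus joined $\emph{noResponseButFailed}_p$), which drives $\gamma_p$ to its maximal value~$1$; no later action of $p$ can modify these three sets, so the resulting configuration is $p$-stable. Combining this with the safety inequality and applying Theorem~\ref{theo:selforg} closes Property~\ref{prop:lp}.

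For kernel preservation, let $f_i, f_{i+1}$ be two consecutive static fragments and let $K_i = \mathit{KerT}(\fin(f_i), \debut(f_{i+1}))$. By the very definition of the topological kernel, every node $p \in K_i$ has the same neighborhood on both sides of the dynamic transition, and none of the C/D actions that strike nodes outside $K_i$ touches its local state; in particular $\emph{replied}_p$, $\emph{target}_p$, and $\emph{noResponseButFailed}_p$ are identical at $\fin(f_i)$ and $\debut(f_{i+1})$, so the projected criterion satisfies $\gamma(\mathit{Proj}_{|K_i}(\fin(f_i))) \preceq \gamma(\mathit{Proj}_{|K_i}(\debut(f_{i+1})))$. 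The main obstacle I anticipate lies in the liveness step: one has to verify that whenever a C/D action disconnects a neighbor of $p$ between two static fragments, the failure detector moves that neighbor into $\emph{noResponseButFailed}_p$ in time, so that $\gamma_p$ does not artificially drop at the fragment boundary and the hypothesis $\gamma(\fin(f_i)) \preceq \gamma(\debut(f_{i+1}))$ required by Theorem~\ref{theo:selforg} is maintained. Once this bookkeeping is made explicit, the kernel preservation reduces to the state-preservation observation above. \qed
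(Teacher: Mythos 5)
Your proof is correct and follows essentially the same route as the paper: establish that $\gamma_p$ is monotone non-decreasing (since $\emph{replied}_p$ and $\emph{noResponseButFailed}_p$ only grow while $\emph{target}_p$ is frozen), use termination of the DFS to drive $\gamma_p$ to $1$ and reach a $p$-stable configuration via Theorem~\ref{theo:selforg}, and derive kernel preservation from the invariance of the local state of kernel nodes. You are in fact somewhat more careful than the paper in two places the paper glosses over --- you check Property~\ref{prop:sp} explicitly and you flag that the cross-fragment hypothesis $\gamma(\fin(f_i)) \preceq \gamma(\debut(f_{i+1}))$ of Theorem~\ref{theo:selforg} must itself be argued from the failure-detector bookkeeping, whereas the paper folds this into a brief case analysis of how $\gamma_p$ evolves during instability.
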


\begin{proof}
  We show  that the algorithm proposed by Baldoni et al~\cite{baldoni07} verifies the kernel
  preservation property (cf. Property~\ref{prop:kernel}) and is self-organizing (cf. Definition~\ref{def:self-org}).  Let \emph{kernel} be the set of nodes which are connected to the query
  initiator during the query process through subgraph $G' \subseteq G$.  By definition this set of nodes
  does not change during instability periods.  Let $p\in \emph{kernel}$.  During instability, the evaluation criterion $\gamma_p$ evolves as follows:
  \begin{itemize}
  \item either $\gamma_p$  increases.   This is the case if and only if   \emph{i)} at least one of $p$'s neighbors replied and none of its neighbors that failed did have time to reply to $p$ or  \emph{ii)} $p$ did not received any reply and  at least one of its failed neighbors did not have time to reply before failing. 
  \item or $\gamma_p$ does not change.   
  \end{itemize}
  
 Therefore, the kernel preservation property is met. Regarding the self-organi-zing part of the proof, 
by correctness of their algorithm (cf.~\cite{baldoni}), node  $p$
  eventually receives the reply from  each of its non failed neighbors.
  Consequently,  $\gamma_p$ is eventually equal to 1, that is $\gamma_p$  reaches a
  $p$-stable configuration. By Theorem
  \ref{theo:selforg}, the algorithm is also self-organizing. \qed
\end{proof}

\begin{corollary}
The one shot query algorithm of Baldoni et al.~\cite{baldoni} is strongly self-organizing with respect to  property:  ``The initiator of a query receives at least all the values held by nodes that are connected to it through a subgraph of $G$  during the whole  proces''.
\end{corollary}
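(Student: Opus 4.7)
The plan is to derive the corollary directly from the preceding lemma, which already establishes strong self-organization with respect to the numerical criterion $\gamma_p$ of Relation~(\ref{eq:gamma_query}). The remaining work is semantic: translating ``$\gamma_p$ reaches a $p$-stable value of $1$ for every kernel node $p$'' into ``the query initiator receives every value held by kernel nodes''.

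First I would fix an execution and a query issued by some initiator $r$, and denote by $\mathit{kernel}$ the set of nodes that remain connected to $r$ through some subgraph $G'\subseteq G$ during the entire query process; by the kernel preservation property already verified in the lemma, the relevant neighborhoods for these nodes are preserved across the instability periods spanned by the query. Applying the local self-organization part of the lemma, for each $p\in\mathit{kernel}$ the criterion $\gamma_p$ eventually equals $1$, which by definition of $\gamma_p$ means $|\mathit{replied}_p|=|\mathit{target}_p\setminus\mathit{noResponseButFailed}_p|$; that is, $p$ has received a reply from every neighbor in $\mathit{target}_p$ that did not fail before replying.

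Next I would propagate this local statement up the DFS tree rooted at $r$. The argument is a straightforward induction on the distance (in $G'$) to $r$: for a leaf $p$ of the DFS exploration inside $\mathit{kernel}$, $p$'s own value is included in the reply message it sends to its parent; for an internal node $p$, the inductive hypothesis gives that the reply $p$ assembles contains the values held by all kernel descendants, and stability of $\gamma_p$ at $1$ ensures $p$ actually forwards this reply (together with its own value and an updated $\mathit{replied}$ set) to its parent. Since $r$ itself belongs to the kernel and its $\gamma_r$ also reaches $1$, $r$ aggregates the replies from all its kernel children and thus collects the values held by every node in $\mathit{kernel}$. Combined with the termination property of the base algorithm from~\cite{baldoni07}, this yields exactly the validity claim in the corollary.

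The main obstacle I expect is the bookkeeping at the interface between the churn events and the DFS state: one must verify carefully that, for $p\in\mathit{kernel}$, the set $\mathit{target}_p$ indeed coincides with $p$'s neighborhood in $G'$ throughout the query (so that nodes leaving the kernel cannot artificially inflate $\mathit{noResponseButFailed}_p$ and hide missing values), and that the $\emph{querying}$/$\emph{replied}$ marking correctly routes each kernel node's value toward $r$ without being absorbed by a non-kernel branch. Once this alignment is established, the corollary follows from the lemma together with a short inductive argument as sketched above.
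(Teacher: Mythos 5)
Your approach is the same as the paper's: the corollary is stated there without any proof at all, as an immediate consequence of the preceding lemma, and your derivation (lemma gives $\gamma_p=1$ for every kernel node, then a DFS-tree induction translates this into the initiator collecting every kernel value) is precisely the semantic bridge the paper leaves implicit. Your filled-in induction and the bookkeeping caveat about $\mathit{target}_p$ and the $\emph{replied}$ marking are sound and, if anything, supply more justification than the paper itself provides, part of which is already covered by the cited validity of the base algorithm.
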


%


Finally, we can  establish a hierarchy of self-organizing classes. Basically, a class gathers a set of
self-organization properties that capture the same information about system entropy. That is,  class A is stronger than class B in the hierarchy if the entropy guaranteed by class A encompasses the entropy guaranteed by class B. 
We have:

\begin{theorem}[Self-organization Hierarchy] 
  \label{theo:hierarchy}
  weak self-organization $\subset$ self-organization $\subset$ strong
  self-organization
\end{theorem}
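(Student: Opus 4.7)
The plan is to handle the two inclusions independently: in each case, establishing the implication between the two classes and then producing a witness showing that the inclusion is strict. Reading the hierarchy ``stronger implies weaker'', as suggested by the paper's informal description in terms of entropy guarantees, I need to show that every strongly self-organizing system is self-organizing, that every self-organizing system is weakly self-organizing, and that each of these implications is strict.

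For the outer step (strong implies self), the implication is immediate from Definition~\ref{def:strong-self-org}, which defines strong self-organization as self-organization (Definition~\ref{def:self-org}) together with the Kernel Preservation Property~\ref{prop:kernel}. Strictness follows by exhibiting a self-organizing system that violates Property~\ref{prop:kernel}; a natural family of witnesses consists of protocols in which a kernel node's local criterion depends on data stored at nodes that may lie outside the kernel, so that a dynamic action at such an outside node causes $\gamma(\mathit{Proj}_{|K_i}(\debut(f_{i+1})))$ to be strictly smaller than $\gamma(\mathit{Proj}_{|K_i}(\fin(f_i)))$.

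For the inner step (self implies weak), Safety (Property~\ref{prop:sp}) is shared, so it suffices to derive Weak Liveness (Property~\ref{prop:wlp}) from Safety plus Liveness (Property~\ref{prop:lp}). Fix $f_i$. The $p$-stable disjunct is common to both Liveness and Weak Liveness, so the only case to analyse is the one in which Liveness is witnessed by the strict inequality $\gamma(\fin(f_i)) \prec \gamma(\fin(f_j))$. Here I will invoke the framework's standing intuition that churn, by itself, never increases $\gamma$---C/D actions only erode organization---so that the strict increase between $\fin(f_i)$ and $\fin(f_j)$ must be realized inside at least one intermediate fragment $f_k$ with $i \leq k \leq j$. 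By Safety, this fragment satisfies $\gamma(\debut(f_k)) \prec \gamma(\fin(f_k))$, witnessing Weak Liveness for $f_i$. Strictness is then witnessed by a system whose local criteria strictly improve inside each static fragment but whose gains are systematically undone by the subsequent dynamic actions; the CAN and Pastry case studies of Sections~\ref{sec:can_case_study}--\ref{sec:pastry} are natural candidates, as their weak self-organization is proven directly while their global entropy need not be monotone across C/D events.

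The main obstacle is precisely this step from Liveness to Weak Liveness. Without an explicit monotonicity assumption on dynamic actions, one can construct executions in which $\gamma$ is constant across every fragment while all strict increases are concentrated on dynamic transitions, yielding a system that satisfies Liveness (hence is self-organizing) but violates Weak Liveness. Making the monotonicity of $\gamma$ under C/D actions part of the framework---or, equivalently, restricting the theorem to churn models in which this invariant holds---is what I expect to be the key ingredient of a clean proof.
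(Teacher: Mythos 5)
Your proposal is far more careful than the paper's own proof, which consists of the single sentence ``Straightforward from definitions'' and offers no argument for either inclusion. Your reading of the chain as ``stronger implies weaker'' at the level of system classes is the intended one (the paper's $\subset$ refers to the guaranteed properties, per the sentence preceding the theorem), and your treatment of the outer inclusion is exactly what the paper could have written: Definition~\ref{def:strong-self-org} is literally self-organization plus Property~\ref{prop:kernel}, so that implication is definitional. Where you genuinely depart from the paper is the inner inclusion, and there your analysis uncovers something the paper silently assumes away. Liveness (Property~\ref{prop:lp}) compares $\gamma(\fin(f_i))$ with $\gamma(\fin(f_j))$, whereas Weak Liveness (Property~\ref{prop:wlp}) demands a strict increase \emph{inside} a single fragment; as you observe, an execution in which $\gamma$ is constant on every static fragment but jumps upward across dynamic transitions satisfies Safety and Liveness yet violates Weak Liveness, so the implication fails under the literal definitions. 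Your proposed repair --- postulating that C/D actions never increase $\gamma$, so that a strict end-to-end increase must be realized inside some intermediate fragment, which by Safety then witnesses Weak Liveness --- is sound over the real-valued codomain the paper works with (a telescoping sum of non-positive and non-negative increments that is strictly positive forces one strictly positive fragment increment). But be aware that this monotonicity-under-churn hypothesis appears nowhere in the paper, and that it points in the opposite direction from the hypothesis $\gamma(\fin(f_i)) \leq \gamma(\debut(f_{i+1}))$ used in Theorem~\ref{theo:selforg}, so you cannot borrow it from there. Your attention to strictness of the inclusions likewise goes beyond the paper, which exhibits no separating examples. In short: your decomposition is the right one, the outer inclusion is complete, and the obstacle you flag for the inner inclusion is a genuine gap in the theorem as stated rather than a defect of your argument.
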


\begin{proof}
  Straightforward from definitions~\ref{def:weak-self-org},~\ref{def:self-org}, and ~\ref{def:strong-self-org}.  \qed
\end{proof}

\section{Composition of self-organization criteria}
\label{sec:composition}
The concept of self-organization can be easily extended to a finite
set of criteria. In the following we show that when criteria are not
interfering, i.e.,~when they are independent, then one can build a
self-organizing system for a complex criterion by using simple
criteria as building blocks.  Using the previous example where the
local evaluation criterion was proximity, a second global evaluation
criterion is needed to decrease the number of hops of a lookup
application. For instance, we may want to use a few long links to
reduce the lookup length.

\begin{definition}[Independent Criteria]
  Let $\mathcal{S}$ be a system and let $\gamma_1$ and $\gamma_2$ be
  two global criteria defined on the configurations of $\mathcal{S}$.
  Let $c$ be a configuration of $S$ and $\mathit{sc}$ and
  $\mathit{sc}^\prime$ be the sub-configurations of $c$ spanned by
  $\gamma_1$ and $\gamma_2$.  $\gamma_1$ and $\gamma_2$ are
  independent with respect to $c$ if $\mathit{sc} \neq
  \mathit{sc}^\prime$. $\gamma_1$ and $\gamma_2$ are independent with
  respect to $\mathcal{S}$ if for any configuration $c$ in
  $\mathcal{S}$, $\gamma_1$ and $\gamma_2$ are independent with
  respect to $c$.
\end{definition}

\begin{definition}[Monotonic Composition]
  \label{def:mc}
  Let $S$ be a system and let $\gamma_i \in I$ a set of criteria on
   $S$ configurations.  $\gamma=\times_{i \in I} \gamma_i$ is a
  monotonic composition of criteria $\gamma_i, i\in I$ if the
  following property is verified: $\forall c_1, c_2, \gamma(c_1) \prec
  \gamma (c_2)$ if and only if  $\exists i \gamma_i(c_1) \prec \gamma_i(c_2)$ and
  $\forall j \neq i \in I, \gamma_j(c_1) \preceq \gamma_j(c_2)$.
\end{definition}

\begin{theorem}[Multi-criteria Self-orgnization]
  \label{multi-criteria}
  Let $\mathcal{S}$ be a system and let $\gamma_1 \ldots \gamma_m$ be
  a set of independent evaluation criteria. If $S$ is weakly,
  respectively strongly, self-organizing for each $\gamma_i$, $i \in [1..m]$
  then $S$ is weakly, respectively strongly, self-organizing for $\gamma_1
  \times \ldots \times \gamma_m$.
\end{theorem}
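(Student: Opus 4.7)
The plan is to verify separately each property defining weak (resp.\ strong) self-organization for the composite criterion $\gamma = \gamma_1 \times \cdots \times \gamma_m$, by reducing every condition to its per-criterion analogue through the monotonic composition hypothesis (Definition~\ref{def:mc}) and the independence assumption. Thus I would establish in turn: the safety property (Property~\ref{prop:sp}), the (weak or full) liveness property, and, for the strong case only, the kernel preservation property (Property~\ref{prop:kernel}).

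Safety is immediate. For any static fragment $f$ in any execution, safety of each $\gamma_k$ gives $\gamma_k(\debut(f)) \preceq \gamma_k(\fin(f))$. Monotonic composition in its non-strict form (the contrapositive of Definition~\ref{def:mc}) then yields $\gamma(\debut(f)) \preceq \gamma(\fin(f))$. The kernel preservation argument, needed only in the strong case, has the same flavour: the kernel $K_i = \mathit{Ker}^\ast(\fin(f_i), \debut(f_{i+1}))$ is defined purely from topology or data and so is the \emph{same} for all criteria. Projecting configurations onto $K_i$ and applying per-criterion kernel preservation to each $\gamma_k$, monotonic composition again propagates to $\gamma(\mathit{Proj}_{|K_i}(\fin(f_i))) \preceq \gamma(\mathit{Proj}_{|K_i}(\debut(f_{i+1})))$.

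For the weak liveness property, fix a static fragment $f_i$. I would apply weak liveness of $\gamma_1$ to obtain some $f_{j_1} \geq f_i$ satisfying its disjunction. In the \emph{strict-increase} alternative, combining with the safety of every other $\gamma_l$ inside $f_{j_1}$ and invoking monotonic composition immediately yields $\gamma(\debut(f_{j_1})) \prec \gamma(\fin(f_{j_1}))$, so $f_{j_1}$ witnesses the liveness for $\gamma$. In the \emph{all-$p$-stable-for-$\gamma_1$} alternative, I would iterate on $k=2,\ldots,m$, each time producing a still later fragment $f_{j_k} \geq f_{j_{k-1}}$. At any step, a strict-increase witness finishes the argument as above; if none arises, one reaches a configuration at which every node is simultaneously $p$-stable for every $\gamma_k$. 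By monotonic composition, $\gamma_p$ can strictly increase under a $p$-action only if some $\gamma_{p,k}$ does, hence simultaneous stability for all the $\gamma_k$'s entails $p$-stability for $\gamma$, providing the second alternative of weak liveness for $\gamma$. The full-liveness variant of Property~\ref{prop:lp} is handled identically, only with $\fin(f_i)$ in place of $\debut(f_j)$.

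The main obstacle lies precisely in the iterated liveness argument: the successive witnesses $f_{j_k}$ live in different configurations, and the dynamic (C/D) actions occurring between them may in principle destroy the $p$-stability previously established for $\gamma_{k'}$ with $k' < k$. This is where the independence of the criteria becomes essential: since each $\gamma_k$ is evaluated on a sub-configuration disjoint from those of the other $\gamma_l$'s, I would argue that actions touching only the sub-configuration governed by $\gamma_l$ cannot disturb $p$-stability with respect to $\gamma_k$, and that, using the boundedness of each $\gamma_k$ in $[0,1]$ together with safety to rule out infinite oscillation, the iteration must eventually stabilise on a configuration where either some criterion strictly improves (giving the first alternative for $\gamma$) or all criteria are stable together (giving the second). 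Completeness of the case analysis at this step is the delicate point of the proof; the rest is bookkeeping over Definitions~\ref{def:weak-self-org}, \ref{def:self-org}, and~\ref{def:strong-self-org}.
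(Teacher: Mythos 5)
Your proposal follows the same core route as the paper's proof: project each static fragment onto the sub-configurations spanned by the individual criteria, apply the per-criterion safety (resp.\ kernel preservation) there, and lift the conclusion to $\gamma = \gamma_1 \times \cdots \times \gamma_m$ via the monotonic composition property of Definition~\ref{def:mc}; for liveness, find one criterion $\gamma_k$ that strictly increases over some later fragment $f_j$ and combine it with safety of the remaining criteria over $f_j$ to get $\gamma(\debut(f_j)) \prec \gamma(\fin(f_j))$. Where you genuinely diverge is that you also treat the second disjunct of Properties~\ref{prop:wlp} and~\ref{prop:lp} (the ``all nodes $p$-stable'' alternative), which the paper's proof silently drops: it simply asserts that a strictly improving witness $f_j$ and criterion $\gamma_k$ exist, which is exactly the case the disjunction does not guarantee. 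Your added iteration over $k = 2, \ldots, m$ is therefore addressing a real hole in the published argument, and your observation that simultaneous $p$-stability for all $\gamma_k$ implies $p$-stability for $\gamma$ (via the contrapositive of monotonic composition) is the right closing step for that branch.

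That said, the delicate point you flag yourself is where your argument, like the paper's, does not fully close. Between the successive witness fragments $f_{j_{k-1}}$ and $f_{j_k}$ the demon may schedule C/D actions, and these are not confined to the sub-configuration spanned by any single criterion, so independence alone does not guarantee that $p$-stability for $\gamma_{k'}$, $k' < k$, survives to $\debut(f_{j_k})$. Moreover, the appeal to boundedness in $[0,1]$ together with safety cannot rule out infinite oscillation: Property~\ref{prop:sp} constrains $\gamma_k$ only \emph{within} static fragments, and nothing prevents it from decreasing across the dynamic periods separating them, so the iteration need not terminate. This gap is inherited from the definitions rather than introduced by you, and the paper avoids it only by ignoring the stability disjunct altogether; but if you want your branch of the argument to be airtight you would need an additional hypothesis (e.g.\ that dynamic actions preserve $p$-stability on unaffected sub-configurations, or that the strict-increase alternative eventually holds for some criterion). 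The safety and kernel-preservation parts of your proposal are correct and match the paper.
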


\begin{proof}
  Let $e$ be a configuration of $S$ and let $e_i$ be the projection of
  $e$ on the sub-configurations modified by $\gamma_i$. Since $S$ is
  self-organizing with respect to $\gamma_i$ then $e_i$ is
  self-organizing with respect to $\gamma_i$.
  \begin{description}
  \item[Safety proof.]%
    Let $f$ be a static fragment in configuration~$e$ and let $f_i$ be
    the projection of $f$ on the sub-configurations spanned by
    $\gamma_i$. From the hypothesis, for all $i$,
    $\gamma_i(\debut(f_i)) \preceq \gamma_i(\fin(f_i))$, hence
    $\gamma_i(\debut(f)) \preceq \gamma_i(\fin(f))$. So,
    $\gamma(\debut(f)) \preceq \gamma(\fin(f))$.

  \item[Weak liveness proof.]%
    Let $f_i$ be a fragment. There is $f_j$ and $\gamma_k$ such that
    $\gamma_k(\debut(f_j)) \prec \gamma_k(\fin(f_j))$. Using the
    safety for all $\gamma_j, j\neq k$ it follows $\gamma(\debut(f_j))
    \prec \gamma(\fin(f_j))$.
  \end{description}

  Overall, $\mathcal{S}$ is weak self-stabilizing for $\gamma$.  The
  proof for strong self-organization follows using a similar
  reasoning.  \qed
\end{proof}

Note that Pastry is self-organizing (cf. Definition \ref{def:mc},
Theorem~\ref{multi-criteria}) for any monotonic composition of the
rounting, neighborhood and leaf set criteria.


\section{Discussions \& Open Problems}
\label{sec:conclusion}
In this paper, we have proposed a framework for defining and comparing the
self-organizing properties of dynamic systems. Self-organization is a
key feature for the newly emerging dynamic networks (peer-to-peer,
ad-hoc, robots or sensors networks). Our framework includes formal
definitions for self-organization, together with sufficient
conditions for proving the self-organization of a dynamic system. We
have illustrated our theory by proving the self-organization of two
P2P overlays: Pastry and CAN and two fundamental abstractions in
distributed computing: the leader oracle $\Omega$ and the one-shot
query problem.

We have also provided a generic algorithm that ensures the
self-organization of a system with respect to a given input criterion.
Our algorithm is based on the greedy technique, and relies solely on
the local knowledge provided by the direct neighborhood of each
process.  This algorithm can be used as a building-block in the
construction of any self-organized DHT-based or unstructured
peer-to-peer systems.

Recently, two other formal definitions for self-organization have been proposed in~\cite{BG09} and~\cite{DT09}.
In \cite{DT09} a system is self-organized if it is self-stabilizing and 
it recovers in sub-linear time following each join or leave. Furthermore, it is assumed 
that joins and leaves can cause only local changes in the system.
Obviously, this definition is too strong to capture the self-organization of P2P systems like Chord or CAN.
 In \cite{BG09} the authors extend the definition proposed in \cite{DT09} in order to capture the self-organization 
of existing P2P systems. Following \cite{DT09} a self-organizing system of $n$ processes is a system that maintains, 
improves or restores one or more safety properties following the occurrence of a subset of external actions 
involving the concurrent joins of up to $n$ processes or the concurrent departures of 
up to $\frac{n}{2}$ processes with a sub-linear recovery time per join or leave.
However, no study case is proposed.

Several problems are left open for future investigation. The first one
is the design of a probabilistic extension to our model.  This study
is motivated by the fact that connection/disconnection actions are
well-modeled by probabilistic laws. Essentially, the liveness property
could be redefined using the Markov chains model for probabilistic
dynamic I/O automata. Moreover, since our generic algorithm for
self-organization uses a greedy deterministic strategy, it may reach
just a local maximum for the global criterion. Adding randomized
choices could be a way to converge (with high probability) to a global
maximum.

  
We also intend to extend our framework towards a unified theory of the
self$^\ast$  properties of dynamic systems (i.e.,~self-healing, self-configuration, self-reconfiguration,
self-repairing).  To this end, we need
to extend our case study to other dynamic systems like robots networks
and large scale ad-hoc or sensors networks, that may offer
complementary insides for understanding the rules that govern complex
adaptive systems.


\section*{Acknowledgments}
We thank Lisa Higham and Joffroy Beauquier for their valuable comments on an earlier version
of this paper.  
\bibliographystyle{plain} \bibliography{./ref}

\begin{thebibliography}{10}

\bibitem{aguilera04}
M.~Kawazoe Aguilera.
\newblock A pleasant stroll through the land of infinitely many creatures.
\newblock {\em SIGACT News}, 35(2):36--59, 2004.

\bibitem{opodis05}
E.~Anceaume, X.~Defago, M.~Gradinariu, and M.~Roy.
\newblock Towards a theory of self-organization.
\newblock In {\em Proceedings of the 9th International Conference on Principles
  of Distributed Systems (OPODIS)}, 2005.

\bibitem{AL01}
P.~Attie and N.~Lynch.
\newblock Dynamic input/output automata: a formal model for dynamic systems.
\newblock In {\em Proceedings of the 20st Annual ACM Symposium on Principles of
  Distributed Computing (PODC)}, pages 314--316, 2001.

\bibitem{BMM02}
O.~Babaoglu, H.~Meling, and Montresor A.
\newblock Anthill: A framework for the developments of agent-based peer-to-peer
  systems.
\newblock {\em Proceedings of the International Conference on Distributed
  Computing Systems (ICDCS)}, 2002.

\bibitem{baldoni07}
R.~Baldoni, M.~Bertier, M.~Raynal, and S.~Tucci Piergiovanni.
\newblock Looking for a definition of dynamic distributed systems.
\newblock In {\em Proceedings of International Conference on Parallel
  Architectures and Compilation Techniques (PACT)}, pages 1--14, 2007.

\bibitem{bawa}
M.~Bawa, A.~Gionis, H.~Garcia-Molina, and R.~Motwani.
\newblock The price of validity in dynamic networks.
\newblock {\em Journal of Computer and System Sciences}, 73, May 2007.

\bibitem{BG09}
A.~Berns and S.~Ghosh.
\newblock Dissecting self-* properties.
\newblock In {\em Proceedings of the IEEE International Conferences on
  Self-Adaptive and Self-Organizing Systems (SASO)}, pages 10--19, 2009.

\bibitem{CBH03}
S.~Capkun, L.~Buttyan, and J.~P. Hubaux.
\newblock Self-organized public-key management for mobile ad-hoc networks.
\newblock {\em Transactions on Mobile Computing}, 2003.

\bibitem{CT96}
T.D Chandra and S.~Toueg.
\newblock Unreliable failure detectors for reliable distributed systems.
\newblock {\em Journal of the ACM}, 43(2):225--267, 1996.

\bibitem{gia}
Y.~Chawathe, S.~Ratnasamy, L.~Breslau, N.~Lanham, and S.~Shenker.
\newblock Making gnutella-like p2p systems scalable.
\newblock In {\em Proceedings of the Annual ACM Conference of the Special
  Interest Group on Data Communication (SIGCOMM)}, 2003.

\bibitem{Dol00}
S.~Dolev.
\newblock {\em Self-Stabilization}.
\newblock The MIT Press, 2000.

\bibitem{DH95}
S.~Dolev and T.~Herman.
\newblock Superstabilizing protocols for dynamic distributed systems.
\newblock In {\em Proceedings of the Second Workshop on Self-Stabilizing
  Systems (SSS)}, 1995.

\bibitem{DH97}
S.~Dolev and T.~Herman.
\newblock Superstabilizing protocols for dynamic distributed systems.
\newblock {\em Chicago Journal Theoretical Computer Science}, 1997.

\bibitem{DT09}
Shlomi Dolev and Nir Tzachar.
\newblock Empire of colonies: Self-stabilizing and self-organizing distributed
  algorithm.
\newblock {\em Theoretical Computer Science}, 410(6-7):514--532, 2009.

\bibitem{DR01}
P.~Druschel and A.~Rowstron.
\newblock Past: A large-scale, persistent peer-to-peer storage utility.
\newblock In {\em Proceedings of the Workshop on Hot Topics in Operating
  Systems (HotOS VIII)}, 2001.

\bibitem{FMSY02}
K.~Fujibayashi, S.~Murata, K.~Sugawara, and M.~Yamamura.
\newblock Self-organizing formation algorithm for active elements.
\newblock {\em Proceedings of the IEEE International Symposium on Reliable
  Distributed Systems (SRDS)}, pages 416--422, 2002.

\bibitem{GBF04}
L.~Garc{\'e}s-Erice, E.~W. Biersack, and P.~Felber.
\newblock Multi+: Building topology-aware overlay multicast trees.
\newblock In {\em Proceedings of the 15th International Workshop on Quality of
  Future Internet Services (QofIS)}, pages 11--20, 2004.

\bibitem{gnutella}
Gnutella.
\newblock Gnutella website. http://gnutella.wego.com.

\bibitem{minimizingchurn}
P.~Brighten Godfrey, S.~Shenker, and I.~Stoica.
\newblock Minimizing churn in distributed systems.
\newblock In {\em Proceedings of the Annual ACM Conference of the Special
  Interest Group on Data Communication (SIGCOMM)}, 2006.

\bibitem{Her00}
T.~Herman.
\newblock Superstabilizing mutual exclusion.
\newblock {\em Distributed Computing}, 13(1):1--17, 2000.

\bibitem{harnessing}
G.~Kan.
\newblock {\em Harnessing the benefits of a disruptive technology}.
\newblock O'Reilley \& Associates, 2001.

\bibitem{hugues&carole}
H.~Fauconnier M.~Abboud, C. Delporte-Gallet.
\newblock Agreement and consistency without knowing the number of correct
  processes.
\newblock {\em Proceedings of International Conference on New Technologies of
  Distributed Systems (Not\`ere))}, 2008.

\bibitem{MJB04}
A.~Montresor, M.~Jelasity, and {\"O}.~Babao\u{g}lu.
\newblock Robust aggregation protocols for large-scale overlay networks.
\newblock In {\em Proceedings of the 39th IEEE/IFIP International Conference on
  Dependable Systems and Networks (DSN)}, pages 19--28, 2004.

\bibitem{MRTPAA05}
A.~Most{\'e}faoui, M.~Raynal, C.~Travers, S.~Patterson, D.~Agrawal, and A.~El
  Abbadi.
\newblock From static distributed systems to dynamic systems.
\newblock In {\em Proceedings of the IEEE International Symposium on Reliable
  Distributed Systems (SRDS)}, pages 109--118, 2005.

\bibitem{CAN-sigcomm}
S.~Ratnasamy, P.~Francis, M.~Handley, R.~Karp, and S.~Shenker.
\newblock A scalable content-addressable network.
\newblock In {\em Proceedings of the Annual ACM Conference of the Special
  Interest Group on Data Communication (SIGCOMM)}, pages 161--172. ACM press,
  2001.

\bibitem{CAN-thesis}
S.~P. Ratnasamy.
\newblock {\em A Scalable Content-Addressable Network}.
\newblock PhD thesis, University of California at Berkeley, 2002.

\bibitem{pastry}
A.~Rowstron and P.~Druschel.
\newblock Pastry: Scalable, distributed object location and routing for
  large-scale peer-to-peer systems.
\newblock In {\em Proceedings of the 4th IFIP/ACM Middleware Conference
  (Middleware '01)}, pages 329--350, 2001.

\bibitem{RD01}
A.~Rowstron and P.~Druschel.
\newblock Storage management and caching in past, a large-scale, persistent
  peer-to-peer storage utility.
\newblock In {\em Proceedings of the 17th ACM Symposium on Operating Systems
  Principles (SOSP)}, pages 188--201, 2001.

\bibitem{self-org-survey}
G.~Di~Marzo Serugendo, N.~Foukia, S.~Hassas, A.~Karageorgos, S.~Most\
  {a}{\copyright}faoui, O.~F Rana, M.~Ulieru, P.~Valckenaers, and C.~Van Aart.
\newblock {\em Self-Organisation: Paradigms and Applications}.
\newblock 2004.

\bibitem{SMZ03}
K.~Sripanidkulchai, B.~Maggs, and H.~Zhang.
\newblock Efficient content location using interest-based locality in p2p
  systems.
\newblock {\em Proceedings Infocom'03}, 2003.

\bibitem{chord}
I.~Stoica, R.~Morris, D.~Karger, M.~F. Kaashoek, and H.~Balakrishnan.
\newblock Chord: A scalable peer-to-peer lookup service for internet
  applications.
\newblock In {\em Proceedings of the ACM SIG/COMM}, pages 149--160, aug 2001.

\bibitem{SY99}
I.~Suzuki and M.~Yamashita.
\newblock Distributed anonymous mobile robots: formation of geometric paterns.
\newblock {\em SIAM Journal of Computing}, 28:1347--1363, 1999.

\bibitem{baldoni}
S.~Tucci-Piergiovanni and R.~Baldoni.
\newblock Eventual leader in the infinite arrival message-passing system model.
\newblock In {\em Proceedings of the 22nd International Symposium on
  Distributed Computing (DISC), short paper}, 2008.

\bibitem{WWA00}
J.~E. Walter, J.~L. Welch, and N.~M. Amato.
\newblock Distributed reconfiguration of metamorphic robot chains.
\newblock {\em Proceedings of the 19th Annual ACM Symposium on Principles of
  Distributed Computing (PODC)}, pages 171--180, 2000.

\bibitem{ZA02}
H.~Zhang and A.~Arora.
\newblock Gs3 : Scalable self-configuration and self-healing in wireless
  networks.
\newblock {\em Proceedings of the 21st Annual ACM Symposium on Principles of
  Distributed Computing (PODC)}, pages 58--67, 2002.

\end{thebibliography}

\end{document}